
\documentclass[12pt, draftclsnofoot, onecolumn]{IEEEtran}

\usepackage{amssymb}
\usepackage{letltxmacro}
\usepackage{pseudocode}
\usepackage{pifont}
\usepackage[]{algorithm2e}
\usepackage[cmex10]{amsmath}
\usepackage[]{algorithm2e}
\usepackage{algorithmic}
\usepackage{color}
\usepackage{cases}
\usepackage{bm}

\usepackage{amsthm}

\ifCLASSINFOpdf
  \usepackage[pdftex]{graphicx}
\usepackage{epstopdf}
\else
  \usepackage[dvips]{graphicx}
\fi

\ifCLASSOPTIONcompsoc

  \usepackage[nocompress]{cite}
\else

  \usepackage{cite}
\fi

\usepackage[caption = false]{subfig}
\usepackage{multirow}

\newtheorem{theorem}{Theorem}[section]

\newtheorem{definition}[theorem]{definition}

\hyphenation{op-tical net-works semi-conduc-tor}

\begin{document}

\title{Throughput Analysis of IEEE 802.11 Multi-hop Wireless Networks with Routing Consideration: A General Framework}

\author{Shahbaz~Rezaei,~\IEEEmembership{Professional Member,~IEEE,}
		Mohammed~Gharib,~\IEEEmembership{Professional Member,~IEEE,}        
        and~Ali~Movaghar,~\IEEEmembership{Senior~Member,~IEEE}
\IEEEcompsocitemizethanks{\IEEEcompsocthanksitem S. Rezaei and A. Movaghar are with the Department of Computer Engineering, Sharif University of Technology, Tehran, Iran.  \protect\\
E-mail: shrezaei@ce.sharif.edu; movaghar@sharif.edu   
\IEEEcompsocthanksitem M. Gharib is with the School of Computer Science, Institute for Research in Fundamental Sciences (IPM), Tehran, Iran.\protect\\
E-mail: gharib@ipm.ir}
\thanks{}}%Manuscript received May, 2017.}}

\markboth{This work has been submitted to the IEEE Transactions on Communications (TCOM) for possible publication.}%
{S. Rezaei  \MakeLowercase{\textit{et al.}}: Throughput Analysis of IEEE 802.11 Multi-hop Wireless Networks with Routing Consideration: A General Framework}

\IEEEtitleabstractindextext{
\begin{abstract}
The end-to-end throughput of multi-hop communication in wireless ad hoc networks
is affected by the conflict between
forwarding nodes. It has been shown that sending more 
packets than maximum achievable end-to-end throughput not only  
fails to increase throughput, but also decreases throughput owing 
to high contention and collision. Accordingly, it is of crucial importance 
for a source node to know the maximum end-to-end throughput.  
The end-to-end throughput depends on multiple factors, such as physical 
layer limitations, MAC protocol properties, routing policy and nodes' 
distribution. There have been many studies on analytical modeling of 
end-to-end throughput but none of them has taken routing policy and 
nodes' distribution as well as MAC layer altogether into account. 
In this paper, the end-to-end throughput 
with perfect MAC layer is obtained based on routing policy and nodes' 
distribution in one and two dimensional networks. Then, imperfections of IEEE 
$802.11$ protocol is added to the model to obtain precise value. 
An exhaustive simulation is also made to validate the proposed models 
using NS$2$ simulator. Results show that if the distribution to the next hop for a 
particular routing policy is known, our methodology can obtain the 
maximum end-to-end throughput precisely.

\end{abstract}

\begin{IEEEkeywords}
Wireless multi-hop networks, End-to-end throughput, Maximum capacity, Routing policy, Analytical analysis.\\\\
\end{IEEEkeywords}}

\maketitle

\IEEEdisplaynontitleabstractindextext

\IEEEpeerreviewmaketitle

%%%%%%%%%%%%%%%%%%%%%%%%%%%%%%%%%%%%%%%%%%%%%%%%%%%%%%%%%%%%%%%%%%%

%                                                                                           Introduction

%%%%%%%%%%%%%%%%%%%%%%%%%%%%%%%%%%%%%%%%%%%%%%%%%%%%%%%%%%%%%%%%%%%

\IEEEraisesectionheading{\section{Introduction}\label{sec:introduction}}

\IEEEPARstart{W}{ireless} ad hoc networks allow several nodes located outside 
of the transmission range of one another to communicate through 
intermediate nodes. Using a routing protocol, a source node can 
find a path toward the destination. Nodes selected by the routing 
protocol are responsible for forwarding the source's packets 
until they reach the destination. In a case that source and 
destination are in the transmission range of each other, no 
routing is required and consequently, maximum end-to-end 
throughput is obtained by analyzing physical and Medium 
Access Control (MAC) layer properties. 

%Shahbaz: Temporarily removed
%In wireless ad hoc networks, the IEEE $802.11$ Distributed Coordination 
%Function (DCF) has commonly used as the MAC layer protocol \cite{ref1}.
%In \cite{Bianchi}, the author used Markov chain to obtain the achievable 
%throughput of IEEE $802.11$ MAC protocol. This model, however, cannot 
%be applied in scenarios where source and destination are not close enough 
%to transmit directly. In this case, intermediate nodes responsible for 
%forwarding packets of the source node contend with each other as 
%well as source and destination. Hence, the end-to-end throughput of 
%multi-hop communication is obviously less than a single-hop communication.

In multi-hop wireless communications, achievable throughput is 
significantly lower than single-hop communication due to the 
inevitable transmission overlap of consecutive forwarding nodes. 
Additionally, sending more packets than maximum achievable 
throughput degrades the end-to-end throughput even further. 
Hence, the knowledge about the maximum achievable throughput, 
which can be exploited by source nodes, considerably improves multi-hop 
communications.

In the previous research \cite{ref4,ref5,ref6,ref8,ref9,ref10,ref14,ref17}, 
the analytical expression for maximum throughput was obtained for 
different scenarios. Nevertheless, all of these papers have assumed 
that all nodes placed equi-distance apart in a straight line. In practice, 
wireless nodes are distributed randomly and it is highly unlikely to end 
up with this assumption. Additionally, even if all nodes do follow such a 
restriction, the routing protocol may choose nodes in a way that the 
assumption fails. Thus, it would not be a precise analysis if the end-to-end 
throughput of multi-hop communication is obtained without the routing 
policy and node distribution. 

In this paper, node distribution and routing policy as well as MAC layer 
limitations are taken into consideration to obtain the precise value of end-to-end 
throughput in multi-hop communication  when there is a single flow in the network. 
The geometry of randomly deployed wireless nodes is 
commonly modeled by a Poisson Point Process (PPP), since it is equivalent to placing each node uniformly in an $n$-dimensional space.
\cite{rahmatollahi2012closed,beyme2014stochastic}. Given a $1$-dimensional ($1$-D) PPP,
first, we obtain the maximum end-to-end throughput with perfect MAC and physical layer for two 
different routing policies including random neighbor routing and furthest neighbor 
routing. In perfect MAC and physical layer, it is assumed that a node has no 
negative effect on communications happening outside its interference range. 
Regardless of the existence of such physical layer and MAC protocol, this 
assumption avails us to capture the effect of node distribution and routing 
policy alone on the maximum end-to-end throughput.

Second, we analyze the effect of routing policy and node distribution on 
IEEE $802.11$ MAC protocol and extend our analysis to capture the effects 
of MAC layer limitations. Although the formulas we obtained here are 
confirmed by simulation and are useful for theoretical analysis, they are 
too complicated to be used in wireless nodes. Therefore, we provide a 
simple way to approximate these formulas so that a source node can 
use them for flow and admission control. 

Third, we obtain some promising approximation formulas for maximum 
end-to-end throughput in $2$-D networks which is validated by simulations. 
Our methodology can be used for any routing policy and node distribution 
as long as 
the first and second moments of the distribution of distance between two consecutive 
nodes in a path is known.
To the best of our knowledge, it is the first analytical results about 
maximum throughput in $2$-D multi-hop wireless networks when IEEE 802.11 is used alongside furthest and random neighbor routing.

One direct application of our results resides within the domain of protocol design. For instance, protocols, such as TCP, have inherent flow and congestion control. However, their performance is not near optimum as a result of increasing window size more than optimum value \cite{ref21}. Knowing the maximum achievable throughput can be directly used in protocol design. It can also be used for layer 2 protocol design. Given that the maximum throughput of multi-hop communication is less than single-hop communication, the maximum multi-hop throughput can be used to design layer 2 protocols that avoid greedily sending at the rate of single-hop throughput which contributes to congestion and consequently degrade multi-hop communication throughput.
%Shahbaz: Temporarily removed
%The main contribution of this paper is to propose a general approach to obtain maximum achievable throughput of multi-hop communication in wireless networks. More specifically, the contributions of this paper are: i) obtaining the exact maximum throughput for two routing policies using perfect MAC layer, ii) taking into account imperfections of MAC and physical layer to obtain more accurate results, iii) finding a general estimation which allows us to obtain the maximum achievable throughput of any routing policy if the moment and expected value of distance to the next hop of that routing policy is known.

The remainder of this paper is organized as follows. In section \ref{relatedWork}, 
we briefly present related works in this area. Section \ref{routingAnalysis} obtains 
end-to-end throughput with perfect MAC and physical layer in $1$-D networks. Section \ref{macLayer} 
considers the limitation of lower layers and obtains end-to-end throughput for IEEE 
802.11. Section \ref{approximation} obtains an approximate formula for end-to-end 
throughput which is simple enough to be used in wireless nodes. While it seems 
impossible to obtain an exact formula for maximum throughput in $2$-D networks, 
some approximations are given in Section \ref{$2$-DSection}. A comprehensive 
performance evaluation is carried out in Section \ref{simulation}. Finally, Section 
\ref{conclusion} concludes our work.

%%%%%%%%%%%%%%%%%%%%%%%%%%%%%%%%%%%%%%%%%%%%%%%%%%%%%%%%%%%%%%%%%%%

%                                                                                         Related Work

%%%%%%%%%%%%%%%%%%%%%%%%%%%%%%%%%%%%%%%%%%%%%%%%%%%%%%%%%%%%%%%%%%%

\section{Related Work}
\label{relatedWork}

In \cite{nelson1984spatial}, Nelson and Kleinrock obtained the spatial capacity of a wireless network when using slotted Aloha Mac protocol. They assumed that intermediate nodes for routing are selected randomly among nodes located toward a destination. Although they considered both routing and MAC layer in their analysis, slotted Aloha is not commonly used in ad hoc wireless networks. Moreover, it is not clear whether their approach can be generalized to any routing policy.
In \cite{ref7}, Gupta and Kumar derived theoretical bounds for the 
capacity of wireless ad hoc networks. But, in practice, with wireless 
networks based on IEEE $802.11$ MAC protocol, achievable end-to-end 
throughput is far from the derived bounds \cite{ref17}. Taking into 
account the limitation imposed by carrier sensing, authors in \cite{ref20} 
have given an insight over the fundamental discrepancy between single-hop 
and multi-hop communication. They have intuitively asserted that maximum 
end-to-end throughput of multi-hop communication is $n$ times less than 
single-hop throughput where $n$ is the number of consecutive nodes 
whose transmission conflicts with each other. Not only did their simulation 
reveal the incompleteness of their assertion, but also their simulations were 
only conducted for lattice network in which all nodes are placed equi-distance 
apart.

In \cite{ref21}, the effect of MAC layer on TCP performance was analyzed 
and two techniques were proposed to mitigate the performance of TCP. 
However, their analysis only holds when all nodes are placed equi-distance apart. 
In \cite{ref12,ref23}, authors have studied the criteria based on which the 
second or more paths from a source to destination increase throughput. 
However, their analysis only considered the scenario where all nodes are 
placed equi-distance apart and they also assumed that the carrier sensing 
range is equal to transmission range which is not realistic \cite{ref13}.
In \cite{fitzgerald2017analytic}, the throughput of a new backoff scheme for 802.11, called TO-DCF,
was analytically analyzed. Nevertheless, their analysis is only applicable for single-hop networks
in which all nodes are close enough to communicate directly. Hence, no routing mechanism was studied.

Authors in \cite{ref2} proposed a methodology to obtain maximum end-to-end 
throughput in a chain topology by creating contention graph corresponding to 
the topology. However, this method needs the location of all nodes. In \cite{ref11}, 
the deviation based method is proposed to obtain end-to-end throughput 
which uses the central limit theorem to model the throughput deviation. 
However, as it is shown in \cite{ref3}, the deviation based 
model fails to estimate the end-to-end throughput when the number of 
nodes in a route exceeds 4. Authors in \cite{ref3} proposed a method to 
mitigate the deficiencies of the deviation based method which works for 
even long paths. However, this model requires the location of all nodes which 
may not be available. In \cite{tabet2004spatial}, authors obtained the multi-hop throughput in wireless networks when slotted Aloha and incremental redundancy is used. They assumed that nodes are distributed following Poisson point process and two routing policies, namely furthest neighbor routing and nearest neighbor routing, are studied. They showed that coding and retransmissions provide a fully decentralized MAC protocol. However, their analysis cannot be used for commonly use IEEE 802.11 MAC protocol. 

Assuming that all nodes are placed equi-distance apart, the 
authors in \cite{ref1} obtained the maximum end-to-end throughput of chain-topology. Many 
research \cite{ref4,ref5,ref6,ref8,ref9,ref10,ref14,ref17,yin2016throughput} used the same idea 
to obtain a slightly better formula for maximum end-to-end throughput in 
chain-topology. However, none of those works took node distribution and 
routing policy into account. Furthermore, it is not clear how one can extend the models to 2 dimensional networks. In \cite{ref13}, however, the author generalized 
the method used in \cite{ref1} to any chain-topology when the location of 
nodes are known. By running a backward iterative process which is relatively 
complicated for a computationally limited wireless node, a source node can 
obtain the maximum end-to-end throughput. In \cite{yin2016throughput}, the model is extended to support heterogeneous networks
in which nodes transmit different length frames with various loads when all nodes reside in carrier sense range of all other nodes. The model, however, cannot be extended to multi-hop networks.

In \cite{7121036}, throughput and stability results of single-hop CMSA/CA network were obtained, but
it is restricted due to the assumption that nodes freeze their arrival process during a back-off period \cite{shneer2015stability}. Furthermore,
their analysis is only applicable for single-hop communication. In \cite{begin2016performance}, a two-level modeling approach was
used to model multi-hop wireless communication in which the first level is a slightly modified version of \cite{Bianchi} and the higher level model consists of
M/M/1/K queues representing each node in the network. The predication modeling framework also covers scenarios where two flows in opposite directions exist. 
Despite the claim that their model can handle any number of nodes, the scenarios they covered is limited to only four nodes. Additionally, routing is not considered in their study.

In \cite{vural2010multihop}, greedy distance maximization model, similar to our furthest neighbor routing policy, was proposed for 2-D randomly deployed sensor networks. They obtained a transformation of Gamma distribution to accurately model multi-hop distance distribution. However, their approach can only be used for furthest neighbor routing policy.
In \cite{ref22}, the author provided an analytical expression for maximum 
end-to-end throughput by building a conflict graph of a path. Although 
maximum end-to-end throughput of a chain with randomly distributed 
nodes is given in this paper, the expression requires finding clique number 
of the conflict graph which itself is an NP-Hard problem. The author also 
suggested using the maximum forward degree of the conflict graph instead of 
clique number, but no simulation was given to confirm the validity of the 
model. Moreover, the hidden and exposed node problems, the difference between transmission 
and interference range and, in general, 
 MAC related problems were neglected. The conflict graph modeling was also used in \cite{stojanova2017conflict} to 
provide a conceptually and computationally simpler model for unsaturated and single-hop wireless networks.
Despite the accuracy of their model, it cannot be extended to multi-hop wireless networks and it cannot certainly deal with routing.
As we mentioned earlier and to the best of our knowledge, no research has 
taken routing policy, node distribution and MAC layer limitation into account 
simultaneously. Moreover, no analytical results are available about maximum 
throughput in $2$-D networks.

%######################################################
\section{Analysis of Routing Policy in 1-D networks}
\label{routingAnalysis}

In this section, average end to end throughput is obtained considering a  
perfect MAC layer. Note that in this paper throughput is defined as the end-to-end throughput of the entire flow which is measured by finding the receiving rate of a destination node.

\begin{definition} 
Perfect MAC layer is defined as a MAC layer in which no 
hidden or exposed node problem occurs. Additionally, after an intermediate 
node sends its packet, it will wait until the packet reaches outside its 
interference range before sending another packet.
\end{definition} 

We will later extend our model 
to support imperfections in MAC layer in the next section. Let denote the 
throughput of single-hop connection by $C$. Now, we obtain the 
end-to-end throughput of multi-hop connection for two different routing policies
referred to as random neighbor routing and furthest neighbor routing. 

In random neighbor routing, next hop is selected randomly among 
the nodes in the transmission range of the current node toward a destination. 
Note that, the next hops are typically selected 
by a routing protocol but, in nodes’ point of view, next hops may have some characteristics 
which categorize that routing protocol as a random neighbor routing.

For example, in Fig. \ref{fig1}, node $i$ selects one 
of the nodes in a set $S=\{i+1,i+2,i+3,i+4\}$ with equal probability.
On the other hand, in furthest neighbor routing, intermediate nodes select 
the furthest node in its transmission range toward destination, i.e. node $i+4$ 
is selected as the next hop.
In 1-dimensional networks, the end-to-end throughput for this 
routing mechanism can be used as an upper-bound. 

Needless to say, in $2$-D networks when several flows are being carried out, it is 
possible for longer routes avoiding the congested region to outperform shorter 
routes even if the furthest neighbor routing is used. Particularly, when 
a routing mechanism in which the routes tend to be the shortest one is used, 
most of the network traffic pass through the center of the network which makes 
the central portion of network highly congested \cite{mei2009routing}. 
That is why even using the furthest neighbor routing mechanism which may pass 
through the congested central part of the network may be less efficient than a long 
route avoiding the center of the network.

%MG1: The figure requires improvement
\begin{figure}
\centering
\includegraphics[width=0.5\columnwidth]{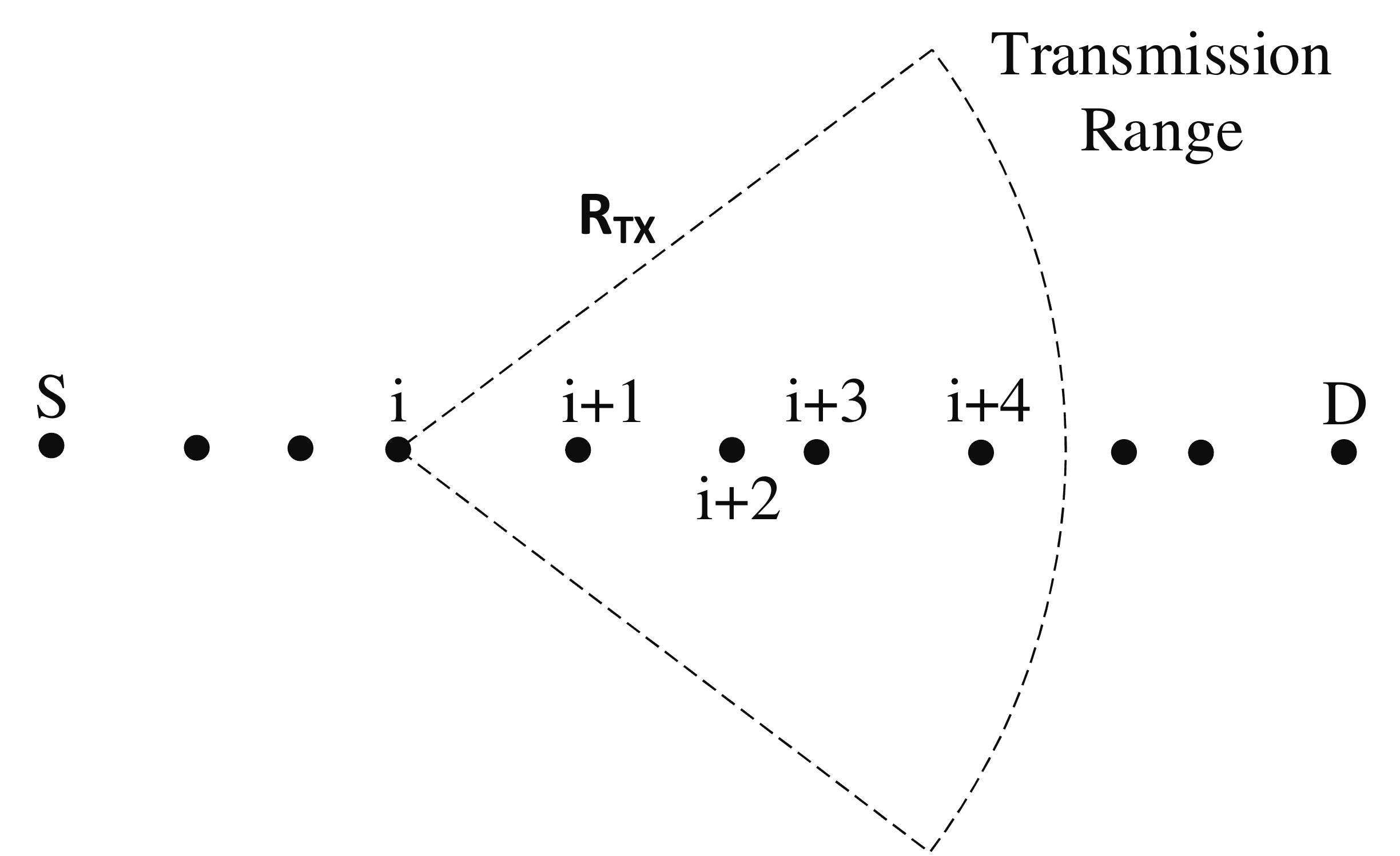}
\caption{In random neighbor routing, node $i$ selects one of the 
nodes $i+1$, $i+2$, $i+3$ or $i+4$. In furthest neighbor routing, node $i$ selects $m$.}
\label{fig1}
\end{figure}

Note that current well-known routing protocols such as AODV are not 
necessarily a random neighbor routing. Although jittering mechanisms~\cite{rezaei2016impact,rezaei2017general}, such as Uniform Jittering, 
may contribute to AODV's randomness, the fact that middle nodes can forward 
a RREQ packet more than once if the newly received RREQ packet belongs to 
a shorter path, renders this protocol less random. If intermediate nodes 
were not allowed to send RREQ packets more than once and they were just 
allowed to forward the first RREQ they had received, AODV would be 
construed as a random neighbor routing. 
%Shahbaz: Temporarily removed
%The reason is that the node 
%which forwards the RREQ packet first in each transmission range would 
%always be selected as an intermediate node in a path. However, since intermediate 
%nodes can forward RREQ packet again if the newly received RREQ contains shorter path, 
%paths established by AODV are usually shorter than those established by random neighbor 
%mechanism. 
It is also quite obvious that AODV is not equal to the furthest neighbor routing either. 
As long as hop count is concerned, AODV is better than random neighbor routing but worse than furthest neighbor routing. This fact will be clearly shown in Section \ref{simulation} by simulation.

The rest of this section categorized into three parts. In the 
first part, a general approach towards the analysis 
of end-to-end throughput is proposed. Given the information about routing policy and node distribution, the proposed analysis can be used to obtain end-to-end throughput of any routing policy. More precisely, our analysis requires the average number of hop needed to reach at a distance $x$ away from a source to be known.
Then, in the following subsections, the end-to-end throughput of random 
neighbor routing as well as furthest neighbor routing are obtained 
based on our general approach.

\subsection{General Approach}

Assume that node $i$ just forwarded its packet to the node $i+2$, 
depicted in  Fig. \ref{fig1}. Since they are 
located in the transmission range of each other, they cannot send their packets simultaneously. 
That is why the throughput of multi-hop transmission is less than 
single-hop transmission. 
Hence, we are interested in finding the average number 
of consecutive nodes following $i$ in the path that must forward the 
packet until the packet reaches a point where its transmission does not 
interfere with the transmission of $i$ so that it can send another packet.

For simplicity, we assume that all nodes have the same transmission range, 
$R_{tx}$, interference range, $R_i$, and carrier sensing range, $R_{cs}$, 
defined in \cite{ref1}. 
Note that in a perfect MAC layer, hidden or exposed node problems do not exist. Hence, in this section, the value of $R_{cs}$ 
is not taken into consideration, i.e. $R_{cs}$=$R_i$. Consequently, two 
transmissions can be simultaneously carried out when the sender and 
receiver of two connections are not in the interference range of each other.

In Fig. \ref{fig2}, assume that the node $i$ has just forwarded 
its packet. Let $N(x)$ be the average number of nodes that 
must forward the packet to reach at a distance $x$ away from 
a node. In order for $i$ to forward its next packet, the previous 
packet must travel outside of the interference range of node $i+1$. 
Otherwise, node $i+1$ fails to receive packets of $i$. Hence, the total 
number of transmissions needed so as to send the next packet is 
$1+N(R_i)$. Hence, the average value of maximum throughput is 
obtained from Eq. (\ref{eq::maxThrou}).

\begin{equation}
\label{eq::maxThrou}
T_{max} = \frac{C}{1+N(R_i)}
\end{equation}
where C is a throughput of a single hop communication. In perfect MAC layer, $T_{max}$ is the maximum 
throughput which can be used in upper layer for regulating source's 
sending rate. Needless to say, there would be no increase in end-to-end 
throughput if a source increases its sending rate. In reality, with 
imperfect MAC, increasing sending rate even deteriorates the end-to-end 
throughput due to high contention and collision. On the other side, if the 
source's sending rate is less than $T_{max}$, the end-to-end throughput 
is equal to source's sending rate. Therefore, the end-to-end throughput is 
obtained from Eq. (\ref{BaseFomula}) where $r$ is the source's sending 
rate.

\begin{figure}
\centering
\includegraphics[width=0.6\columnwidth]{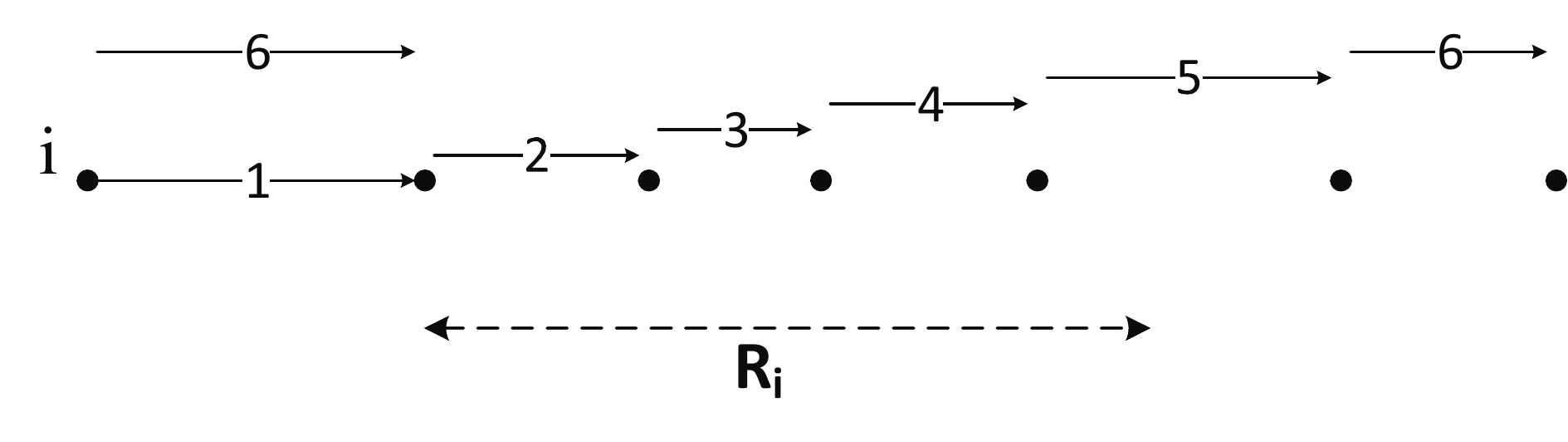}
\caption{Spatial reuse of wireless channel: After the fifth transmission,
 node $i$ can send a packet again.}
\label{fig2}
\end{figure}

\begin{table}[ht]
\caption[Table of notations]{Table of notations. }
\resizebox{1\textwidth}{!}{\begin{minipage}{\textwidth}
\begin{tabular}{ l | l  }
  $r$ & Sending rate of a source node \\
  $C$ & Throughput of a single-hop communication \\
  $T_{max}$ & Maximum throughput of multi-hop communication  \\
  $T(r)$ & Maximum throughput of multi-hop communication when source sending rate is $r$  \\
 $R_{tx} \ (R) \ $ & Transmission range\\
  $R_i$ & Interference range  \\
  $R_{cs}$ &  Carrier sensing range \\
 $N(x)$ & Average number of nodes to reach at a distance $x$ for general routing policy \\
 $\bar{N}(x)$ & Approximation of $N(x)$ \\
 $N_R(x)$ & Average number of nodes in random neighbor routing to reach at $x$ in $1$-D networks \\
 $N_F(x)$ & Average number of nodes in furthest neighbor routing to reach at $x$ in $1$-D networks \\
 $N_{R2D}(x)$ & Average number of nodes in random neighbor routing to reach at $x$ in $2$-D networks \\
 $N_{F2D}(x)$ & Average number of nodes in furthest neighbor routing to reach at $x$ in $2$-D networks \\
  $\lambda$ & Node density \\
  $d_{i}$ & Distance of node $i$ from the source\\
  $P_{col}$ & Probability of collision \\
 $x$ & Normalized air time \cite{ref1} \\
$\gamma_{n}$ & Normalized lower incomplete Gamma function \\
$\psi (x)$ & Function defined by Eq. (\ref{eq::psi}) for convenience \\
$C_n$ & Constant terms of $N_F(x)$ defined by Eq. (\ref{eq::Cn}) \\
\end{tabular}
\label{notations}
\end{minipage} }
\end{table}

\begin{equation}
T(r) = Min \Bigl\{ r,\frac{C}{1+N(R_i)} \Bigr\}
\label{BaseFomula}
\end{equation}

Now that we have the general formula to obtain the end-to-end throughput,
we need to compute $N(x)$. This function depends only on routing mechanism 
and node distribution. In this section, we assume that location of nodes in a network follows a 
homogeneous $1$-dimensional Poisson Point Process (PPP), that is, nodes are distributed randomly in a segment of a line as it is shown 
in Fig. \ref{fig2}. 

\subsection{Random Neighbor Routing}

In the random neighbor routing, a next hop is selected from neighboring 
nodes toward the destination with equal probability. In Fig. \ref{fig1}, 
the intermediate node $i$ selects one of the nodes in a set 
$S=\{i+1,i+2,i+3,i+4\}$ with probability of $\frac{1}{4}$. Here, we 
are interested in the distance distribution  to the next hop in 
a random neighbor routing.

\begin{theorem}
The distribution of distance to the next hop when using the random neighbor 
routing in a PPP network has a uniform distribution, $U{\sim}[0,R_{tx}]$.
\end{theorem}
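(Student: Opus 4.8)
The plan is to exploit the fundamental conditioning property of a Poisson Point Process (PPP): conditioned on the number of points falling in a bounded interval, those points are independent and uniformly distributed over that interval. First I would place node $i$ at the origin; since the network is one-dimensional, the candidate next hops are exactly the PPP points lying in the interval $(0,R_{tx}]$ on the side of the destination. Let $N$ denote the number of such points, which is Poisson distributed with mean $\lambda R_{tx}$.

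Second, I would condition on the event $\{N=n\}$ for some $n\geq 1$ (at least one neighbor is needed for a next hop to exist). By the conditional-uniformity (order-statistics) property of the PPP, the $n$ candidate positions $X_1,\dots,X_n$ are i.i.d.\ uniform on $(0,R_{tx}]$. Random neighbor routing then selects an index $J$ uniformly from $\{1,\dots,n\}$, independently of the positions, so the distance to the next hop equals $X_J$.

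The crux is to show that $X_J$ is uniform on $(0,R_{tx}]$ for every fixed $n$. This is a one-line computation: for $t\in(0,R_{tx}]$,
\[
P(X_J\leq t\mid N=n)=\sum_{j=1}^{n}P(J=j)\,P(X_j\leq t)=\sum_{j=1}^{n}\frac{1}{n}\cdot\frac{t}{R_{tx}}=\frac{t}{R_{tx}},
\]
which is precisely the uniform CDF and, crucially, is independent of $n$.

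Finally, since the next-hop distance is conditionally uniform on $(0,R_{tx}]$ for every $n\geq 1$, averaging over the law of $N$ given $N\geq 1$ leaves the distribution unchanged, and hence the unconditional distance is uniform on $[0,R_{tx}]$. I expect the only delicate point to be the bookkeeping of the conditioning on $N\geq 1$ and making explicit that the Poisson parameter $\lambda$ cancels entirely, so that the result is distribution-free in the node density. Everything else reduces to the standard conditional-uniformity theorem for the PPP together with the elementary observation that uniformly sampling one of several i.i.d.\ uniform points returns a uniform point.
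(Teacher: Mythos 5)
Your proposal is correct and follows essentially the same route as the paper's own proof: both invoke the conditional-uniformity (order-statistics) property of the PPP given $n$ points in $(0,R_{tx}]$ and then observe that uniformly selecting one of $n$ i.i.d.\ uniform points yields a uniform distance regardless of $n$. Your version merely makes explicit the averaging over $N$ conditioned on $N\geq 1$ and the cancellation of $\lambda$, which the paper leaves implicit.
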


\begin{proof}
In random neighbor routing, the next hop is selected randomly 
from the set of all nodes in the transmission range of an intermediate 
node, i.e. toward the destination, with an equal probability. Given that 
there are $n$ nodes in the transmission range of an intermediate node, 
we are interested in finding the distance distribution of each of them. 
It is known \cite{ppp} that in order to have a homogeneous PPP, in a 
region of length $R_{tx}$, location of points should be selected from a 
uniform random variable (RV), $U{\sim}[0,R_{tx}]$. Hence, the distance distribution of 
all $n$ nodes in the transmission range of an intermediate node are 
identically independently distributed (i.i.d.).
In other words, next hop is selected randomly with an equal probability 
from a pool of $n$ i.i.d uniform random variables. Therefore, regardless of 
the node selected, the distribution of the distance to that node is uniform 
RV, $U{\sim}[0,R_{tx}]$.

Note that a next hop in the random neighbor routing is not the same as the next immediate node. In fact, it is well known that the distance between 
two adjacent nodes in PPP follows Exponential distribution \cite{ppp}. However, the next hop in random neighbor routing can be any node located in the transmission range of node $i$ towards the destination.
\end{proof} 

The theorem is also true for binomial point process (BPP) since PPP is 
equivalent to BPP, conditioned on the presence of $n$ nodes in the 
transmission range \cite{srinivasa2010distance}. Consequently, all the 
formulas obtained in this paper for uniform random policy in PPP is also 
true in BPP.

As it is shown in Fig.~\ref{fig2}, in order for the node $i$ to send its next 
packet, the last packet should be forwarded until it reaches a node whose 
interference range is smaller than its distance to the node $i+1$. 
In other words, we are interested in finding how many uniform i.i.d random 
variables are required so that the sum of their values exceeds the interference range.

Given that the distance distribution to the next hop in random neighbor 
routing is uniform, $N(x)$ is given by theorem \ref{RV}. Hereafter, we will write 
$R$ instead of $R_{tx}$ for brevity.

\begin{theorem}
\label{RV}
The expected number of i.i.d uniform random variables ($U{\sim}[0,R]$) 
whose sum exceeds $x$ is given by Eq. (\ref{uniformSum}).

\begin{equation}
\label{uniformSum}
N_R(x)= \sum_{k=0}^{\lceil \frac{x}{R} \rceil - 1} \frac{(-1)^k}{k!} (\frac{x}{R}-k)^k e^{(\frac{x}{R}-k)}
\end{equation}

\end{theorem}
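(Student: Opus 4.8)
The plan is to treat $N_R(x)$ as the mean of the stopping time $N = \min\{n : U_1 + \cdots + U_n > x\}$ for i.i.d.\ summands $U_j \sim U[0,R]$. After rescaling by $R$ (so that $U_j/R \sim U[0,1]$ and $y = x/R$), it suffices to compute $E[N]$ for unit-uniform summands with threshold $y$, and the claim reduces to $E[N] = \sum_{k=0}^{\lceil y\rceil - 1}\frac{(-1)^k}{k!}(y-k)^k e^{\,y-k}$.

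The first step is the tail-sum identity for a nonnegative integer-valued random variable,
$$E[N] = \sum_{n=0}^{\infty} P(N > n) = \sum_{n=0}^{\infty} P(S_n \le y), \qquad S_n = \sum_{j=1}^n U_j,$$
where $S_0 = 0$, so the $n=0$ term is $1$. Finiteness of $E[N]$ follows because $P(S_n\le y)$ decays super-exponentially once $n$ exceeds $2y$. I would then insert the Irwin--Hall cumulative distribution function of $S_n$,
$$P(S_n\le y) = \frac{1}{n!}\sum_{k=0}^{\lfloor y\rfloor}(-1)^k\binom{n}{k}(y-k)^n,$$
valid for every $n\ge 0$ and $y\ge 0$: terms with $k>n$ drop out since $\binom{n}{k}=0$, and the finite-difference identity $\sum_{k=0}^n(-1)^k\binom{n}{k}(y-k)^n=n!$ guarantees the expression still equals $1$ when $y\ge n$.

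The heart of the argument is to interchange the two summations and resum over $n$. Writing $\binom{n}{k}/n! = 1/(k!(n-k)!)$ and shifting $j=n-k$,
$$\sum_{n=k}^{\infty}\frac{\binom{n}{k}(y-k)^n}{n!} = \frac{(y-k)^k}{k!}\sum_{j=0}^{\infty}\frac{(y-k)^j}{j!} = \frac{(y-k)^k}{k!}\,e^{\,y-k},$$
so that $E[N] = \sum_{k=0}^{\lfloor y\rfloor}\frac{(-1)^k}{k!}(y-k)^k e^{\,y-k}$, which is the stated formula (the top index equals $\lceil y\rceil-1$ for non-integer $y$, and at integer $y$ the extra $k=\lfloor y\rfloor$ term vanishes because $(y-k)^k=0^{\,y}=0$). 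The main obstacle I expect is justifying the interchange of the outer and inner sums: I would control it by Fubini/dominated convergence, bounding $\sum_{n,k}\binom{n}{k}|y-k|^n/n!$ by a convergent exponential series so the rearrangement is legitimate.

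As an independent check that avoids quoting the Irwin--Hall law, I would rederive the result from a renewal equation. Conditioning on $U_1$ gives $g(y)=1+\int_0^{\min(y,1)} g(y-u)\,du$ with $g=E[N]$; on $[0,1)$ this yields $g'=g$, $g(0)=1$, hence $g(y)=e^y$, and on each interval $[m,m+1)$ differentiation produces the delay equation $g'(y)=g(y)-g(y-1)$. A short induction then shows the proposed closed form $\phi_m(y)=\sum_{k=0}^m\frac{(-1)^k}{k!}(y-k)^k e^{\,y-k}$ satisfies $\phi_m'(y)=\phi_m(y)-\phi_{m-1}(y-1)$ (differentiating the $k$-th term reindexes it to the $(k-1)$-th term of $\phi_{m-1}(y-1)$) and is continuous at $y=m$ (the new top term carries the factor $0^m$), which pins down $g$ uniquely on every interval and confirms the formula.
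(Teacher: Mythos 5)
Your proposal is correct, and your primary argument takes a genuinely different route from the paper's. The paper proceeds exactly as your ``independent check'': it conditions on the first hop to get the renewal equation $N_R(x)=1+\frac{1}{R}\int_{x-R}^{x}N_R(u)\,du$, differentiates to obtain the delay equation $N_R'(x)=\bigl(N_R(x)-N_R(x-R)\bigr)/R$, quotes the solution $e^{x/R}$ on $(0,R]$ from Ross, and then asserts by induction that the closed form satisfies the delay equation, explicitly omitting the substitution as tedious and fixing the constants by matching at the breakpoints $x=nR$. Your backup argument is this same proof carried out more completely: the identity $\phi_m'(y)=\phi_m(y)-\phi_{m-1}(y-1)$ makes the induction step explicit, and the continuity check at integer $y$ via the vanishing factor $0^m$ is precisely the constant-matching the paper only gestures at. Your main argument --- tail-summing $E[N]=\sum_{n\ge 0}P(S_n\le y)$ (valid here because the nonnegative summands make $S_n$ monotone, so $\{N>n\}=\{S_n\le y\}$), inserting the Irwin--Hall CDF, and resumming over $n$ --- is instead a direct derivation: it produces the closed form rather than verifying a guessed one, and your observation that the finite-difference identity $\sum_{k=0}^{n}(-1)^k\binom{n}{k}(y-k)^n=n!$ extends the Irwin--Hall expression to all $y\ge 0$ is exactly what makes the double sum well-formed; in fact the interchange is even easier than you suggest, since the $k$-sum has only $\lfloor y\rfloor+1$ terms and each inner $n$-series is an absolutely convergent exponential series, so swapping a finite sum with convergent series needs no domination argument. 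Incidentally, your derived top index $\lfloor y\rfloor$ is marginally better behaved than the paper's $\lceil y\rceil-1$ at the degenerate point $y=0$ (where the paper's sum is empty but $E[N]=1$), and the two agree everywhere else, as your remark about the vanishing $k=\lfloor y\rfloor$ term at integer $y$ shows. As for what each approach buys: your summation route is constructive and self-contained modulo the Irwin--Hall law, but it is tied to uniform summands; the renewal/delay-ODE route requires knowing the answer in advance but generalizes to arbitrary hop distributions, which is exactly how the paper reuses it for the furthest-neighbor density in its Theorem \ref{theorem::3}, where no Irwin--Hall analogue is available.
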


\begin{proof}
We are using induction to prove the theorem. For $0{<}x{\leq}R$, 
the solution is given in \cite{ross} which is $e^{\frac{x}{R}}$. 
Now, assuming that the formula holds for $0{<}x{\leq}nR$, we 
prove it also holds for $nR{<}x{\leq}(n+1)R$ in which $n$ is a 
positive integer. Let's define $M(x)$ as follows \cite{ross}

\begin{equation}
	M(x) \triangleq min  \Bigl\{ n:\sum_{i=1}^{n}U_i>x  \Bigr\}
\end{equation}

and 

\begin{equation}
	N_R(x)=E[M(x)].
\end{equation}

$N_R(x)$ can be obtained by conditioning on $U_{1}$, as the first 
uniform random variable, as follows

\begin{multline}
	N_R(x)= \int_{0}^{R} E[M(x)|U_1=y] f_{U_i} (y) dy =  \frac{1}{R} \int_{0}^{R} E[M(x)|U_1=y] dy.
\label{eqUni1}
\end{multline}

The conditional expectation itself is obtained from Eq. (\ref{eqUni2}).

\begin{figure*}[t!]
\begin{center}
 \subfloat[Node density = $0.04$]{ \includegraphics[width=.34\linewidth]{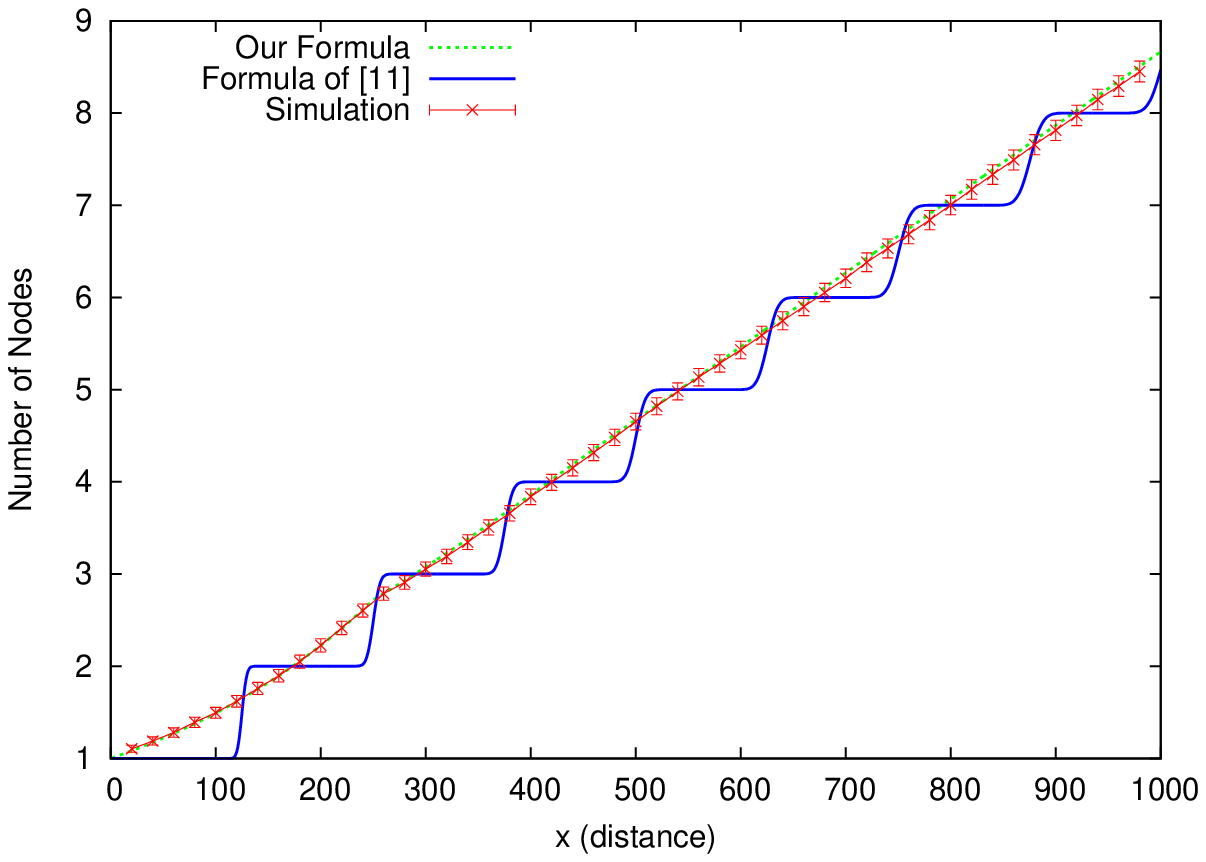}}
 \subfloat[Node density = $0.12$]{ \includegraphics[width=.34\linewidth]{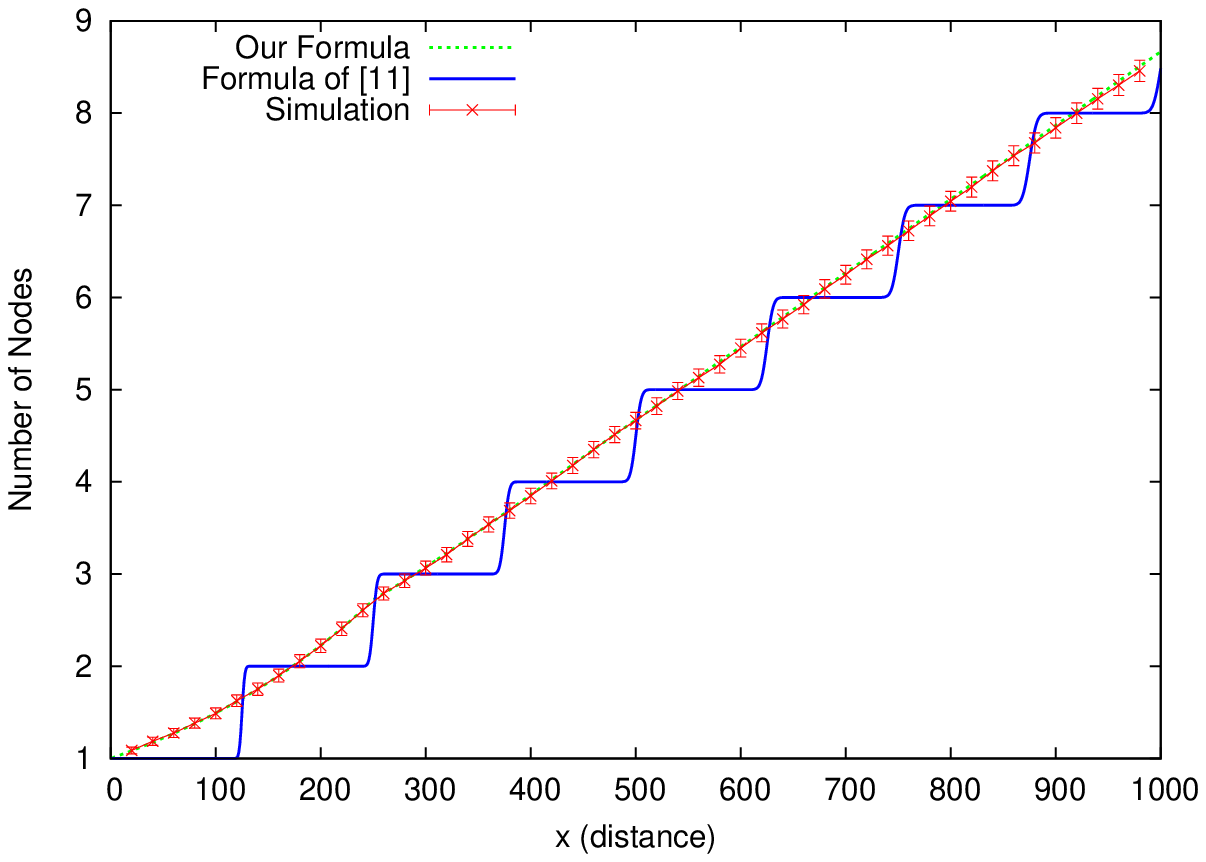}}
 \subfloat[Node density = $0.4$]{ \includegraphics[width=.34\linewidth]{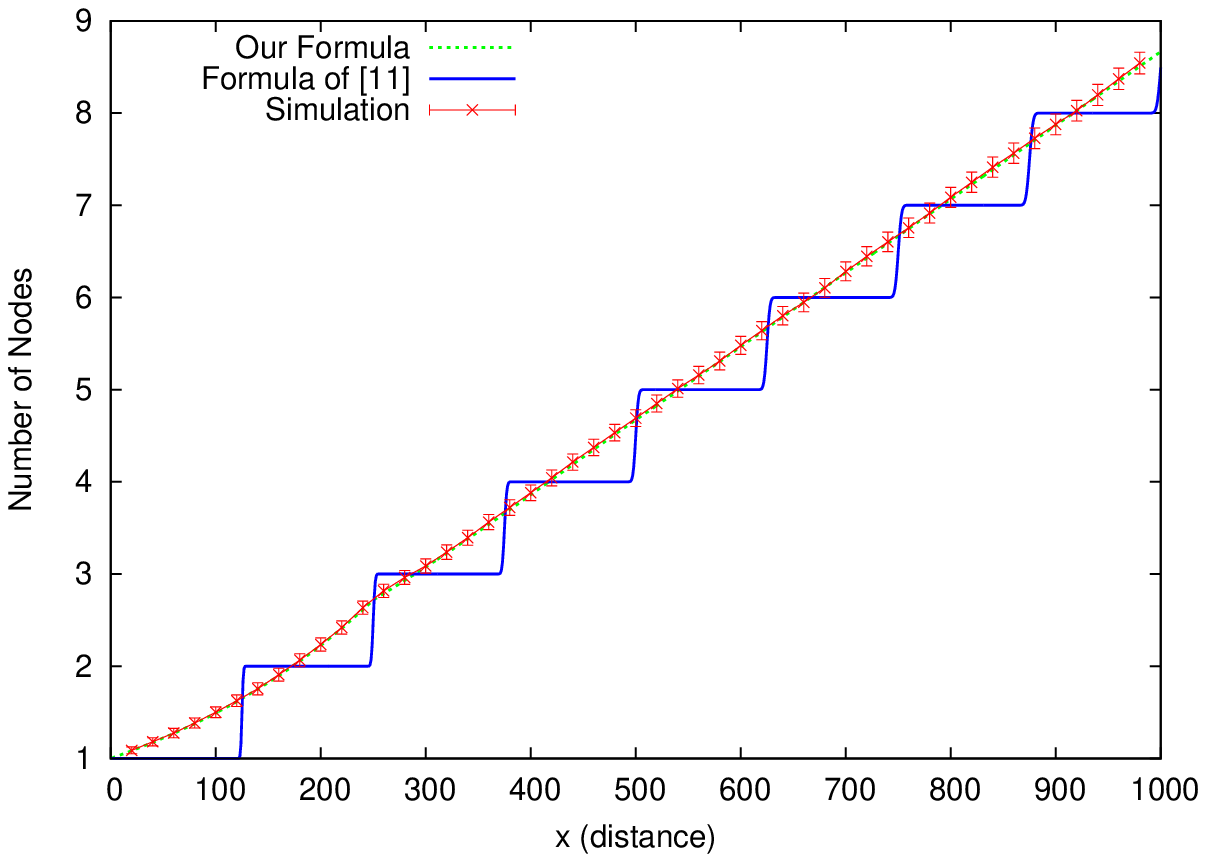}}
\caption{Average hop required to reach a distance (x) using random neighbor routing.}
\label{figRandomtDistFormula}
\end{center}
\end{figure*}

\begin{equation}
	 E[M(x)|U_1=y] = 
	\left\{
		\begin{array}{ll}
			1  & \mbox{if } x < y \leq R \\
			1 + N_R(x-y) & \mbox{if } y < x
		\end{array}
	\right.
\label{eqUni2}
\end{equation}

Substituting Eq. (\ref{eqUni2}) into Eq. (\ref{eqUni1}) and letting  $u=x-y$ results in Eq. (\ref{finalEq}).

\begin{equation}
\label{finalEq}
	N_R(x)= 1+ \frac{1}{R} \int_{x-R}^{x} N_R(u) du 
\end{equation}

By differentiating the last equation with respect to $x$, i.e. Eq. (\ref{finalEq}), 
we are given Eq. (\ref{uniformODE}).

\begin{equation}
\label{uniformODE}
	N_R^{\prime}(x)= \frac{N_R(x) - N_R(x-R)}{R}.
\end{equation}

Considering that $nR{<}x{\leq}(n+1)R$, it is easy to see that 
Eq. (\ref{uniformSum}) holds in Eq. (\ref{uniformODE}).

\end{proof} 

In \cite{rahmatollahi2012closed}, authors obtained a closed-form distribution 
of number of intermediate nodes required to reach a node at distance $D$ away 
from the source using incomplete gamma function as Eq. (\ref{RahmatollahiEq}). 

\begin{equation}
\label{RahmatollahiEq}
	N_H \sim \gamma_{n}(D;\lambda, n\beta) - \gamma_{n}(D;\lambda,(n+1)\beta)]
\end{equation}

In Eq. (\ref{RahmatollahiEq}), $\beta=1+\bar{d}\lambda$, $\lambda$ 
and $D$ are node density and distance, respectively. $\bar{d}$ is the average hop length 
under an arbitrary routing policy and $\gamma_{n}(x;\lambda, n)$ 
is normalized lower incomplete Gamma function defined as Eq. (\ref{eq::gamma}).

\begin{equation}
\label{eq::gamma}
	\gamma_{n}(x;\lambda, n) \triangleq \frac{1}{\Gamma(n)} \int_{0}^{\lambda x} t^{n-1}e^{-t} dt.
\end{equation}

 Note that this formula works for any routing policy, such as random neighbor routing and furthest neighbor routing. The only difference is the average hop length which clearly differs from one routing policy to another. Assuming an imaginary node is located at distance $D$ away from 
the source, we can use their formula to obtain $N_R(x)$ by finding 
the expected value of the distribution. Note that for random neighbor 
routing $\bar{d}=\frac{1}{2}R$. The formula of \cite{rahmatollahi2012closed}, however, is not as accurate as 
our formula, as it is shown in Fig. \ref{figRandomtDistFormula}.
For the simulation, we used a simple C++ code which randomly distributed nodes in a line of length $1250$ meter. Then, a source node was put at the beginning of the line. Finally, random neighbor routing policy was used to find how many nodes are required to reach certain points. We conducted $2000$ simulation for each data point.
Moreover, we omitted the simulations in which the network was disconnected which only happened at low densities. It is also clear that density does not have any effect on the number of nodes needed to reach a point as long as the network is connected.  Note that it is only true for random neighbor routing.

%
%The simulation 
%also shows that for extremely low density networks, our mathematical expression 
%diverge from simulated results at high values of $x$. But this is not problematic in 
%reality. First, in reality, when the density is low in one-dimensional networks, the 
%network is almost surely disconnected for high values of distance ($x$) 
%\cite{mao2017connectivity}. Needless to say, we had to omit many simulations 
%since there were no path at all between source and destination. Hence, in reality, 
%it is unlikely to have any route available in low density networks at all when the 
%distance between source and destination is long. Second, $R_{tx}<R_i<2R_{tx}$ 
%is satisfied in general \cite{ref13}. Thus, we do not need results for high values of x.

\subsection{Furthest Neighbor Routing}

The number of nodes in an area of length $R$ in PPP is Poisson random variable 
with parameter $\lambda R$ in which $\lambda$ is node density \cite{ppp}. 
Hence, the complementary cumulative distribution of distance, conditioned on 
having at least one node in an area of length $R$ could be obtained from
Eq. (\ref{DistanceCCD}). This equation indicates the probability of having 
at least one node in an area of length $R-x$, conditioned on having at least one 
node in the whole $R$. From Eq. (\ref{DistanceCCD}) we can conclude 
Eq. (\ref{eq::Conclude}) and as a result Eq. (\ref{equFurthestDistribution})
is derived, which is similar to the equation proposed in \cite{haenggi2005distances} that did not have the density term, $\lambda$.

\begin{equation}
\label{DistanceCCD}
P(X>x) = \frac{1-e^{-\lambda (R-x)}}{1-e^{-\lambda R}}
\end{equation}

\begin{equation}
\label{eq::Conclude}
	F_X(x) = 1 - P(X>x) = \frac{e^{\lambda x}-1}{e^{\lambda R}-1}
\end{equation}

\begin{equation}
\label{equFurthestDistribution}
	f_X(x) = \frac{\lambda e^{\lambda x}}{e^{\lambda R}-1}.
\end{equation}

Here, to make our model mathematically tractable, we assume 
that distribution of distance to next node for node $i$ is independent 
of the previous nodes. However, for furthest random routing, this is 
not always true, particularly when the network has low density. If the 
furthest node of $i$ is node $i+1$ in a distance $d_{i+1}$, we know 
that there is no node in distance $d_{i+1}<x<R$. Hence, the furthest 
node of node $i+1$ cannot reside in this region. That is why in furthest 
neighbor routing the distribution of distance to the next hop of intermediate 
nodes is not independent. It is worth noting that, in a relatively dense network, 
this region becomes so small that it barely has any perceptible influence 
on the analysis.

\begin{figure*}[t!]
\begin{center}
 \subfloat[Node density = $0.04$]{ \includegraphics[width=.34\linewidth]{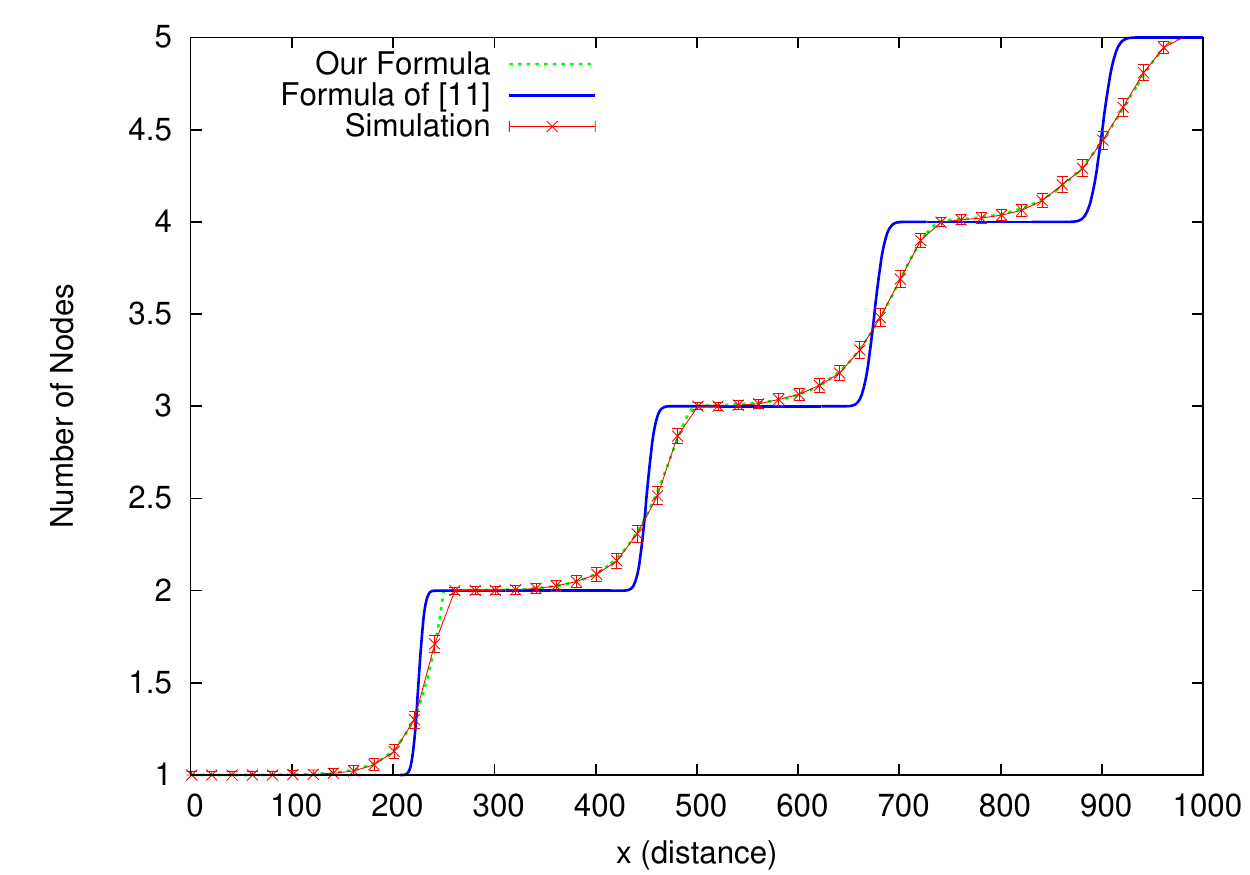}}
 \subfloat[Node density = $0.12$]{ \includegraphics[width=.34\linewidth]{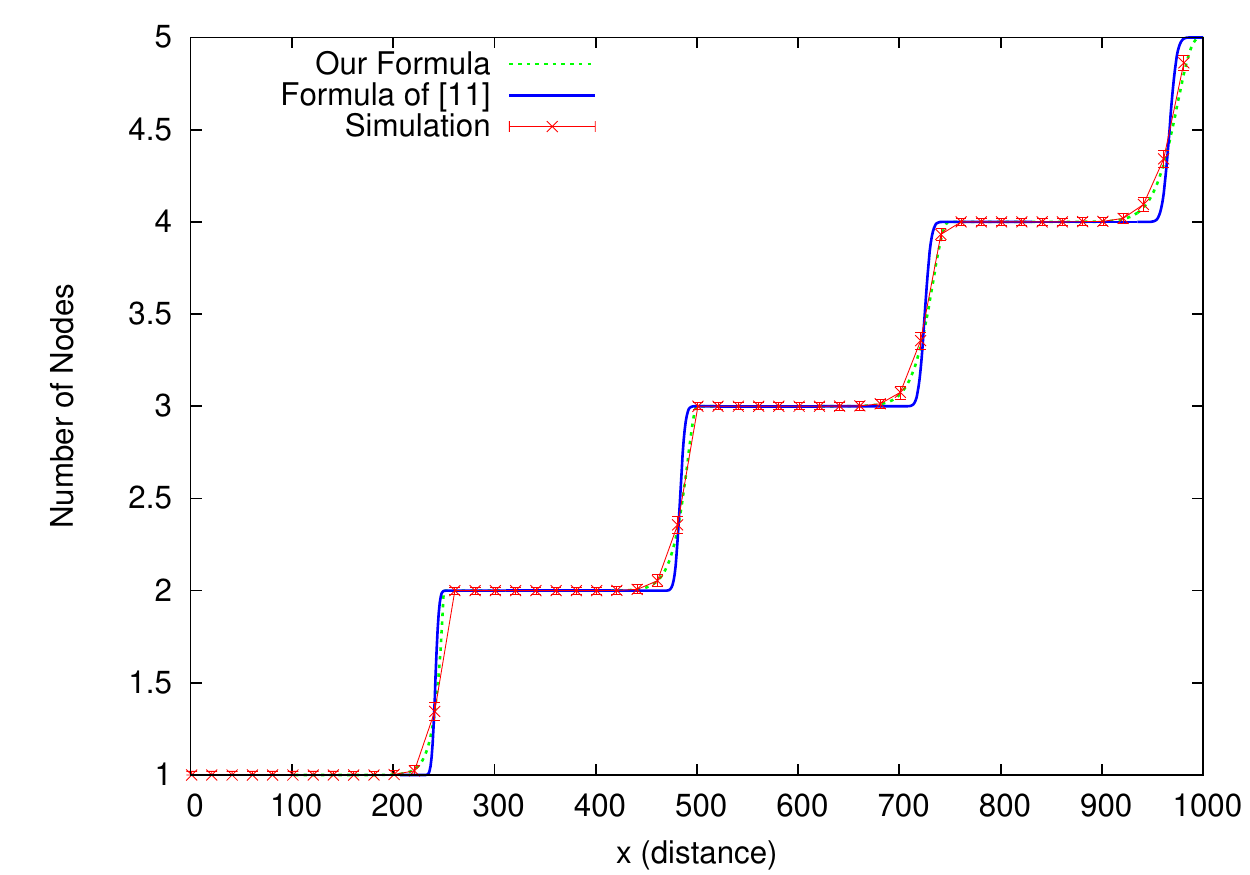}}
 \subfloat[Node density = $0.4$]{ \includegraphics[width=.34\linewidth]{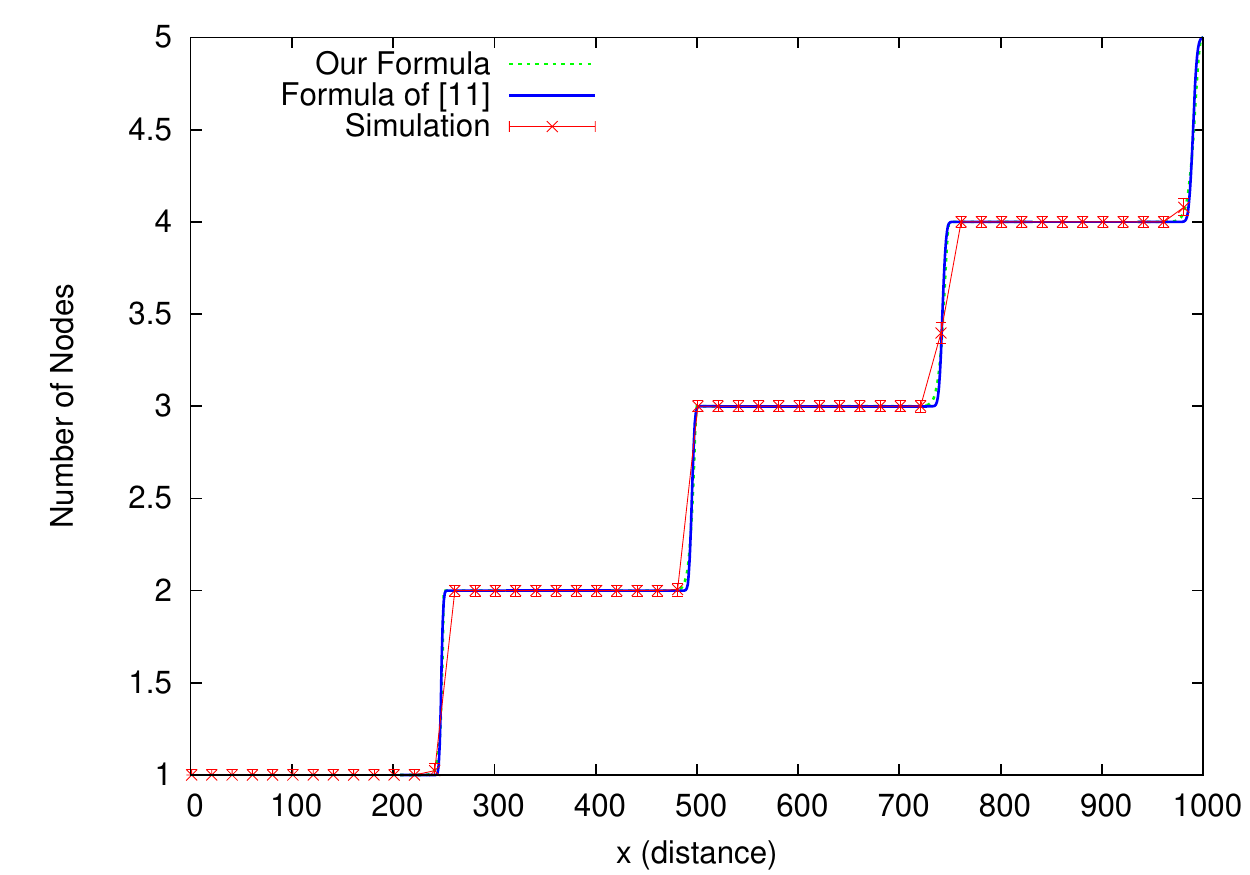}}
\caption{Average hop required to reach a distance (x) using furthest neighbor routing.}
\label{figFurthestDistFormula}
\end{center}
\end{figure*}

\begin{theorem}
\label{theorem::3}
The expected number of i.i.d random variables, with distribution function $f_X(x)$, whose sum exceeds 
$x$ is given by Eq. (\ref{furthestSum}).

\begin{multline}
\label{furthestSum}
	N_F(x)= C_{\lceil \frac{x}{R} \rceil} e^{\psi (x)} + {\lceil \frac{x}{R} \rceil}(1-e^{-\lambda R}) + 
	\sum_{k=1}^{\lceil \frac{x}{R} \rceil -1} \frac{(-1)^k \psi (x) \psi^{k-1} (x-kR) }{k!}  e^{\psi (x-RK)} C_{\lceil \frac{x}{R} \rceil - k},
\end{multline}

In Eq. (\ref{furthestSum}), parameters $\psi (x)$ and $C_n$ are defined 
by Eq. (\ref{eq::psi}) and (\ref{eq::Cn}), respectively.

\begin{equation}
\label{eq::psi}
	\psi (x) \triangleq \frac{\lambda x}{1-e^{-\lambda R}}
\end{equation}

\begin{equation}
\label{eq::Cn}
	C_n \triangleq 
	\left\{
		\begin{array}{ll}
			e^{-\lambda R}  & \mbox{if } n=1 \\
			e^{\psi (-R)}[\psi(R) C_1+e^{-\lambda R}-1] + C_1 & \mbox{if } n=2 \\
			e^{\psi (-R(n-1))}(e^{-\lambda R}-1) + C_{n-1} + \\ \sum\limits_{k=1}^{n-2} \bigg[ \frac{\psi (R(n-1)) \psi^{k-1} (R(n-1-k)) }{(-1)^k k! e^{\psi (Rk)}}  \times  (C_{n-k-1}-C_{n-k})  \bigg]   & \mbox{if } n>2
		\end{array}
	\right.
\end{equation}

\end{theorem}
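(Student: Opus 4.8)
The plan is to mirror the renewal argument used for Theorem~\ref{RV}, replacing the uniform next-hop density by the exponential density $f_X$ of Eq.~(\ref{equFurthestDistribution}). Define $M(x) \triangleq \min\{n : \sum_{i=1}^{n} X_i > x\}$ and $N_F(x) = E[M(x)]$, with the convention $N_F(u)=0$ for $u\le 0$. Conditioning on $X_1=y$ exactly as in Eqs.~(\ref{eqUni1})--(\ref{eqUni2}), and using $\int_0^R f_X(y)\,dy = 1$, gives the Volterra-type identity
\[
N_F(x) = 1 + \frac{\lambda e^{\lambda x}}{e^{\lambda R}-1}\int_{\max(x-R,\,0)}^{x} N_F(u)\, e^{-\lambda u}\, du .
\]
Writing $\mu \triangleq \lambda/(1-e^{-\lambda R})$, so that $\psi(x)=\mu x$ in the notation of Eq.~(\ref{eq::psi}), I would differentiate this identity (Leibniz's rule on the two moving limits) and re-substitute the integral from the identity itself. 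After collecting terms, using the simplification $\lambda + \lambda/(e^{\lambda R}-1) = \lambda e^{\lambda R}/(e^{\lambda R}-1) = \mu$, this yields the delay differential equation
\[
N_F'(x) = \mu\, N_F(x) - \mu\, N_F(x-R) - \lambda ,
\]
the direct analogue of Eq.~(\ref{uniformODE}).

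For the base case $0<x\le R$ the delay term vanishes and the equation reduces to the linear ODE $N_F'(x)=\mu N_F(x)-\lambda$ subject to $N_F(0)=1$ (one draw always exceeds $0$). Its solution $N_F(x)=e^{-\lambda R}e^{\psi(x)}+(1-e^{-\lambda R})$ is exactly Eq.~(\ref{furthestSum}) for $\lceil x/R\rceil=1$, with empty sum and $C_1=e^{-\lambda R}$ as in Eq.~(\ref{eq::Cn}); this anchors the induction.

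Next I would induct over the intervals indexed by $\lceil x/R\rceil$. On the interval with $\lceil x/R\rceil = n+1$ the shifted argument $x-R$ lies in the previous interval, so $N_F(x-R)$ is already a known sum of polynomial-times-exponential terms, and the delay equation becomes the first-order linear ODE $N_F'(x)-\mu N_F(x) = -\mu N_F(x-R)-\lambda$. Solving it with the integrating factor $e^{-\mu x}$ is what generates the claimed structure: each forcing term $\psi(x-R)\psi^{j-1}(x-(j+1)R)\,e^{\psi(x-(j+1)R)}$ has its exponential cancelled by $e^{-\mu x}$, leaving a polynomial whose integration raises the degree by one and introduces the factor $1/k$ and the alternating sign, reproducing the summand $\frac{(-1)^k\psi(x)\psi^{k-1}(x-kR)}{k!}e^{\psi(x-kR)}C_{n+1-k}$; the $-\lambda$ term integrates to a constant $(1-e^{-\lambda R})$ per interval, whose coefficient therefore increments to $\lceil x/R\rceil$ (one checks directly that this constant mode solves the delay equation because $\mu(1-e^{-\lambda R})=\lambda$). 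The remaining homogeneous constant $C_{n+1}$ multiplying $e^{\psi(x)}$ is then pinned down by imposing continuity of $N_F$ at the boundary $x=nR$; equating the two interval expressions there and solving for the new constant produces precisely the recursion of Eq.~(\ref{eq::Cn}), the $n=2$ line corresponding to the first boundary, where the sum collapses to a single term, and the $n>2$ line to the general boundary.

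The step I expect to be the main obstacle is the bookkeeping inside this induction: tracking how the integrating factor acts on each shifted polynomial, keeping the arguments $x-kR$ consistent across the $n-1$ forcing terms, and verifying that the continuity condition at $x=nR$ reproduces the three-case formula for $C_n$ verbatim rather than merely up to rearrangement. Because the forcing on interval $n+1$ is itself a sum that grows with $n$, the cleanest write-up is probably to adopt Eq.~(\ref{furthestSum}) as an ansatz and verify by direct substitution that it satisfies both the delay equation and the continuity conditions on each interval---exactly the style used to close the proof of Theorem~\ref{RV} for the uniform case---rather than re-deriving it constructively from scratch.
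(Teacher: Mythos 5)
Your proposal is correct and follows essentially the same route as the paper: conditioning on the first hop to get the renewal identity (the paper's Eq.~(\ref{eq::NFxNew})), reducing it to the delay differential equation $N_F'(x)=\mu N_F(x)-\mu N_F(x-R)-\lambda$, which is exactly Eq.~(\ref{FurthestODE}) after the simplification $\lambda+\lambda/(e^{\lambda R}-1)=\lambda/(1-e^{-\lambda R})$, solving the base case on $(0,R]$ with $N_F(0)=1$ to get $C_1=e^{-\lambda R}$, and closing the induction by substituting the ansatz of Eq.~(\ref{furthestSum}) interval by interval with the constants $C_n$ fixed by continuity at the boundaries, which is precisely the paper's condition $N_F(R(n-1)+\varepsilon)=N_F(R(n-1))$. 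The only differences are mechanical (you differentiate the Volterra identity directly and re-substitute, whereas the paper first multiplies by $e^{-\lambda x}$), and your closing verification-by-substitution is the same step the paper itself declares ``straightforward, yet tedious'' and omits.
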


\begin{proof}
Just like what was described in Theorem \ref{RV}, we use induction 
to prove the theorem. Let $M(x)$ and $N_F(x)$ be as Eq. (\ref{eq::Mx})
and Eq. (\ref{eq::NFx}), respectively. 

\begin{equation}
\label{eq::Mx}
M(x) \triangleq min  \Bigl\{ n:\sum_{i=1}^{n}f_i>x  \Bigr\}
\end{equation}

\begin{equation}
\label{eq::NFx}
N_F(x)=E[M(x)].
\end{equation}

It is notable that $N_F(x)$ can be obtained by conditioning on $U_{1}$ 
as it is shown in Eq. (\ref{eq::NFxCond}), while the conditional 
expectation itself is obtained form Eq. (\ref{eq::condEx}).

\begin{multline}
\label{eq::NFxCond}
	N_F(x)= \int_{0}^{R} E[M(x)|f_1=y] f_{i} (y) dy =  
		\frac{\lambda}{e^{\lambda R}-1} \int_{0}^{R} e^{\lambda y} E[M(x)|f_1=y] dy
\end{multline}

\begin{equation}
\label{eq::condEx}
	E[M(x)|f_1=y] = 
	\left\{
		\begin{array}{ll}
			1  & \mbox{if } x < y \leq R \\
			1 + N_F(x-y) & \mbox{if } y < x
		\end{array}
	\right.
\end{equation}

Letting $u=x-y$ results in Eq. (\ref{eq::NFxNew}) while 
Eq. (\ref{eq::expNFx}) could be obtained from multiplying 
both side by $e^{-\lambda x}$.

\begin{equation}
\label{eq::NFxNew}
	N_F(x)=  1+ \frac{\lambda e^{\lambda x}}{e^{\lambda R}-1} \int_{x-R}^{x} e^{-\lambda u} N_F(u) du.
\end{equation}

\begin{equation}
\label{eq::expNFx}
	 e^{-\lambda x} N_F(x)=  e^{-\lambda x}+ \frac{\lambda}{e^{\lambda R}-1} \int_{x-R}^{x} e^{-\lambda u} N_F(u) du.
\end{equation}

By differentiating Eq. (\ref{eq::expNFx}) with respect to $x$, we are given 
Eq. (\ref{FurthestODE}).

\begin{equation}
\label{FurthestODE}
	N_F^{\prime}(x)= \frac{\lambda N_F(x)}{e^{\lambda R}-1} + \lambda N_F(x) - \frac{\lambda e^{\lambda R} N_F(x-R)}{e^{\lambda R}-1} - \lambda.
\end{equation}

As a result of the definition of $N(x)$, for $0{<}x{<}R$, $N(x-R)$ 
is equal to zero. Note that if the value of $x$ is negative, 
there is no transmission needed to reach the distance x since the packet 
has already passed that point. Hence, we conclude Eq. 
(\ref{eq::NFxSemiFinal}) for $0{<}x{\leq}R$. The solution of this equation is 
mentioned in Eq. (\ref{eq::NFxFinal}). Since $N_F(0)=1$, $c_1$ 
would be equal to $e^{-\lambda R} $.

\begin{equation}
\label{eq::NFxSemiFinal}
	N_F^{\prime}(x)= \frac{\lambda N_F(x)}{e^{\lambda R}-1} + \lambda N_F(x) - \lambda =  \frac{\lambda e^{\lambda R} N_F(x)}{e^{\lambda R}-1} - \lambda 
\end{equation}

\begin{equation}
\label{eq::NFxFinal}
	N_F(x)= c_1 e^{\psi(x)} +1 - e^{-\lambda R} 
\end{equation}

Now, considering that the theorem holds for $(n-1)R{<}x{\leq}nR$, 
it is straightforward, yet tedious, to see that Eq. (\ref{furthestSum}) holds in Eq. 
(\ref{FurthestODE}) for  $nR{<}x{\leq}(n+1)R$. Thus, we omit the remainder of the cumbersome calculation and substitution here. Note that differential equation 
constant, $c_n$, is obtained by solving $N_F(R(n-1)+\varepsilon) = N_F(R(n-1))$.
\end{proof} 

Eq. (\ref{RahmatollahiEq}) can also be used to obtain average number 
of hops in furthest neighbor routing as it is suggested in 
\cite{rahmatollahi2012closed}. The value of $\bar{d}$, however, is different 
from the case of random neighbor routing. The authors suggested the following 
formula to obtain $\bar{d}$ of furthest neighbor routing.

\begin{equation}
	\bar{d} = \frac{1}{\lambda} ln  \left( 1-\frac{\lambda \bar{d}}{\lambda - \lambda \bar{d} -1}  \right).
\end{equation}

However, the implicit form of the equation which requires numerical 
evaluation increases the complexity of their solution even further 
\cite{vural2007probability}. Having compared our formula with the one proposed in 
\cite{rahmatollahi2012closed}, we have found that our 
approach is extremely accurate as it is shown in Fig. 
\ref{figFurthestDistFormula}. Regardless of the node density, 
our approach always agrees with simulation. 
%Shahbaz: Temporarily removed
%Rahmatollahi's approach \cite{rahmatollahi2012closed}, however, is accurate only when node density is high.

Using Eq. (\ref{BaseFomula}) and Theorem \ref{theorem::3}, 
a source node can obtain the end-to-end throughput for furthest 
neighbor routing. Unlike previous works on end-to-end 
throughput \cite{ref4,ref5,ref6,ref8,ref9,ref10,ref14,ref17,ref13} which assumed 
that nodes are placed equi-distance apart or assumed the location 
of nodes to be known by the source, our analysis obtains end-to-end 
throughput based on realistic node distribution and routing policy.

%######################################################
\section{MAC Layer Consideration}
\label{macLayer}

In this section, limitations of wireless MAC layer are taken into account 
to obtain a more accurate expression for maximum end-to-end throughput. It is obvious that the perfect MAC, as defined in previous section, is impossible in wireless networks. Therefore, it is crucial to consider MAC and phyiscal layers' limitation alongside the routing and node distribution.
Most previous research has focused on MAC layer limitations without considering 
node distribution and routing policy, which clearly shows the importance of analyzing MAC layer.
Although analyzing MAC and routing layers simultaneously is difficult,
no analysis is precise enough if it overlooks one of the aspects. Thus, we provide a method 
to constitute a comprehensive model containing both routing and MAC layer consideration in 
this section. In this section, the probability of collision and hidden node problem, which is non-existence in perfect MAC layer, are taken into account to obtain a more accurate end-to-end throughput.

\begin{figure}
\centering
\includegraphics[width=0.4\columnwidth]{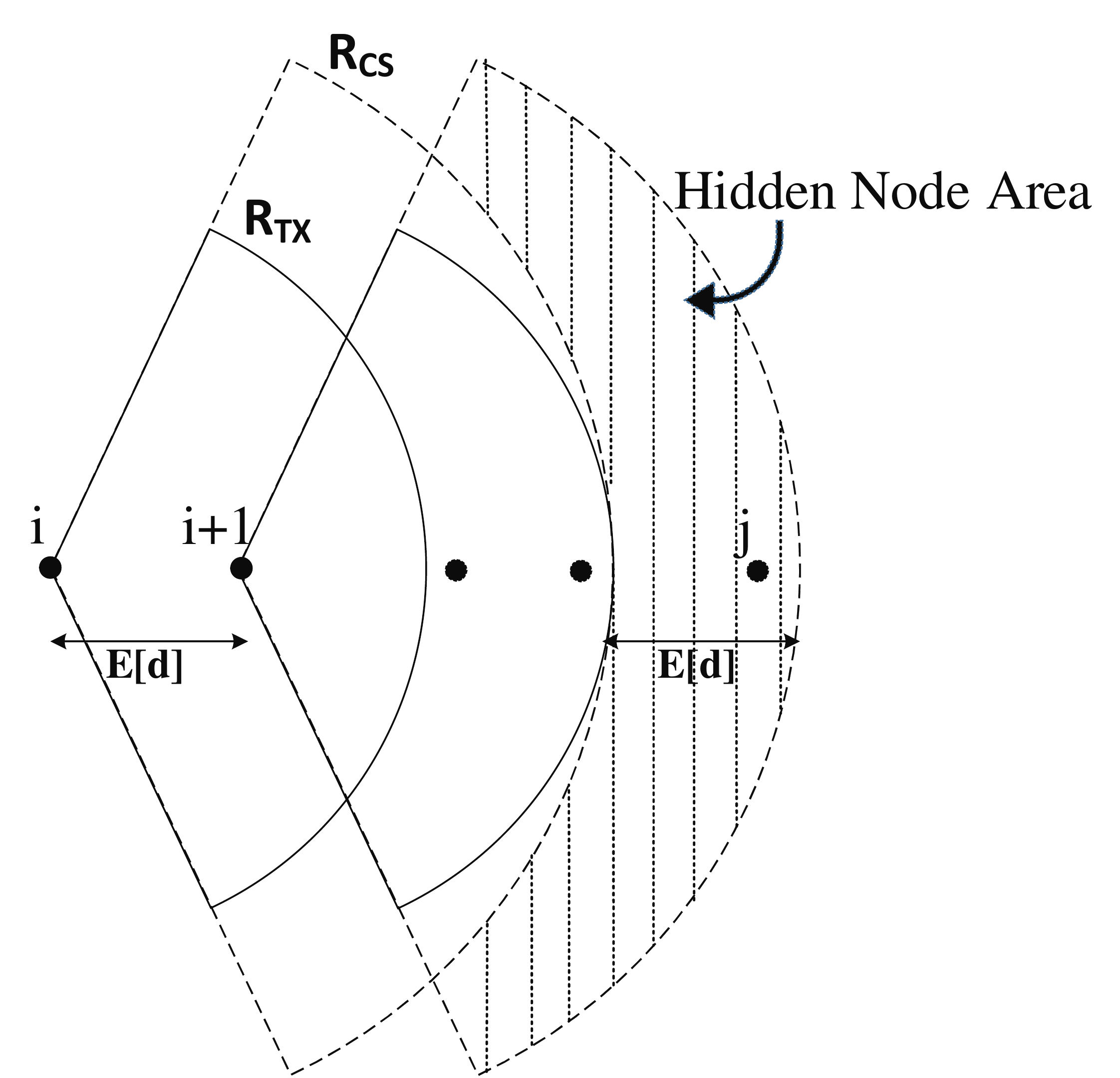}
\caption{Hidden node area.}
\label{figHNA}
\end{figure}

In wireless networks, no perfect MAC layer protocol has been proposed so far.
It is also unlikely to be able to devise one due to the inherent problems 
and difficulties of wireless media. Hence, our analysis in the previous section 
can be used as a maximum upper bound throughput. If some information about the MAC layer is available, it is possible to improve 
our analysis. As an instance, if the probability of collision is known, denoted by 
$P_{col}$, the number of transmission needed to successfully transmit a packet 
would be $\frac{1}{1-P_{col}}$. Therefore, maximum throughput can be obtained 
from Eq. (\ref{eq::maxThrough}), which is the improved version of Eq. (\ref{eq::maxThrou}). It is worth noting that, in this equation, the
assumption $R_{cs}>R_i$ is considered.

\begin{equation}
\label{eq::maxThrough}
T_{max} = \frac{1-P_{col}}{1+N(R_{cs}) }C
\end{equation}

Note that, in general, $R_{tx}<R_{i}<2R_{tx}<R_{cs}$  is 
satisfied \cite{ref13}. In this case, unlike perfect MAC layer, 
carrier sensing may block non-interfering transmission, referred to as exposed node problem \cite{ref17}. 
Consequently, it limits the capacity further. Hence, in order to 
make sure that an intermediate node can send its next packet, 
the previous packet must be forwarded to a distance greater than 
$1+N(R_{cs})$. That is why $N(R_{cs})$ is used in this formula 
instead of $N(R_i)$.

It is also notable that the value of $P_{col}$ is not independent of 
routing mechanism, nodes distribution and density. That is the main 
reason the other proposed analysis has always assumed a chain 
topology with constant distance or very particular structure. To obtain 
the value of $P_{col}$, both routing mechanism and MAC layer protocol
are taken into account here. We will use an IEEE $802.11$ MAC analysis 
based on \cite{ref1} and two different routing mechanisms elaborated 
upon in the previous section to obtain end-to-end throughput.

In IEEE $802.11$ MAC, nodes that are not located in the carrier sensing 
range of a sender, but located in the carrier sensing range of the receiver 
may cause a collision in that transmission \cite{ref1}. This region containing these 
interfering nodes, which is shown in Fig. \ref{figHNA}, is called hidden 
node area. Since the source is not aware of the transmission in this 
region, it may start a transmission. However, the receiver will fail to 
decode the packet due to the interference.

Since collision occurs due to nodes in hidden node area, we are 
interested in obtaining the number of nodes in this region. Given any 
arbitrary routing policy that selects next hop statistically independent 
from the previous nodes, we expect to see an equal number of nodes, 
chosen for routing, in each line segment of length $x$ on average. As 
it is shown in Fig. \ref{figHNA}, the length of hidden node area is equal 
to the distance between node $i$ and $i+1$, denoted by $E[d]$.
On average, we expect to see only one node in a region of 
length $E[d]$ since $E[d]$ is the average distance between 
two consecutive nodes. Thus, the average number of nodes 
in hidden node area is 1 regardless of the routing policy. 
Note that we are only interested in nodes selected for 
routing packets. There might be many 
non-contributing nodes in this area that does not have 
any influence on throughput.

%Shahbaz: Merge them into one figure 
%\begin{figure}
%\centering
%\includegraphics[width=0.45\columnwidth]{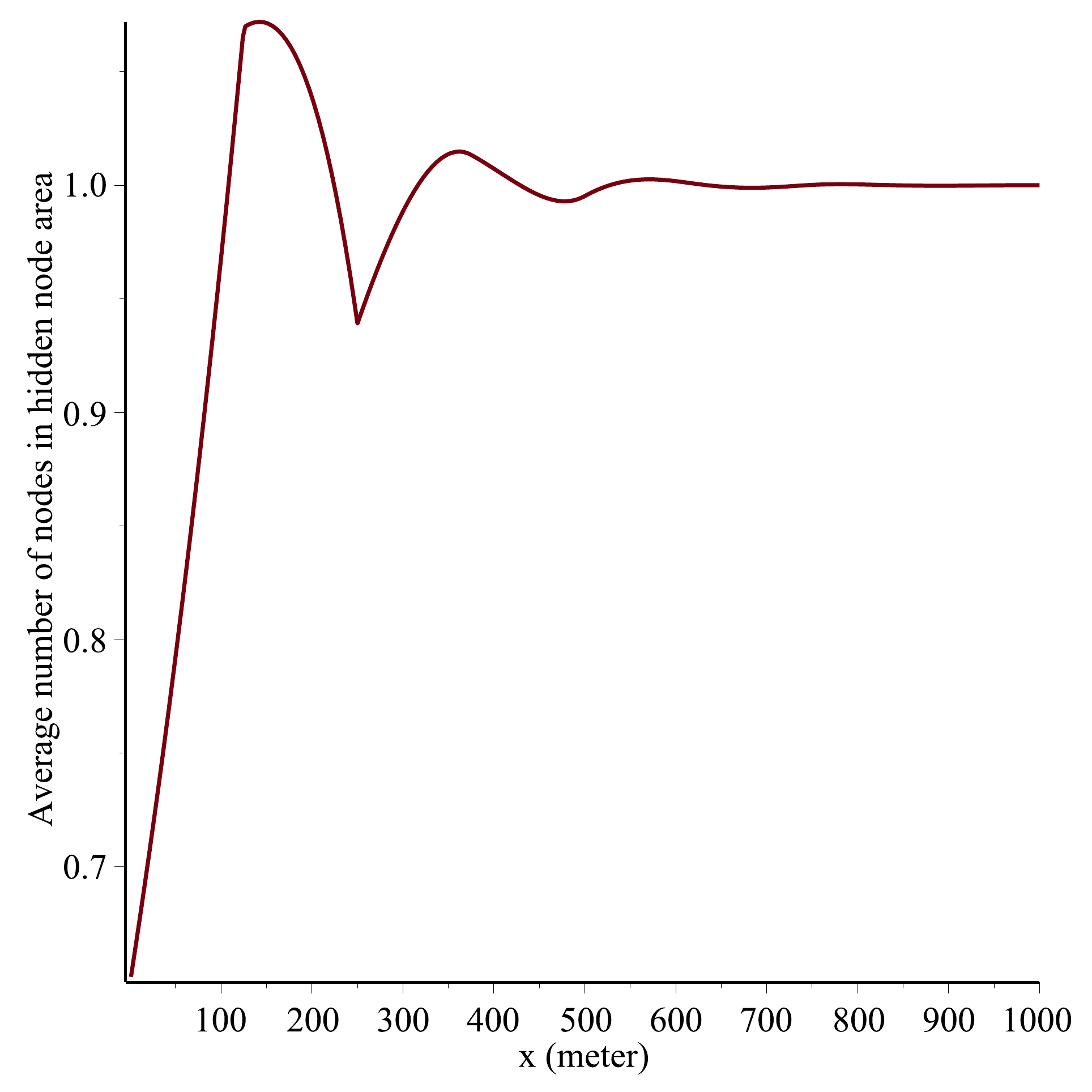}
%\caption{Number of nodes in hidden node area using random neighbor routing in $1$-D networks.}
%\label{HiddenAreaRandom}
%\end{figure}
%
%\begin{figure}
%\centering
%\includegraphics[width=0.45\columnwidth]{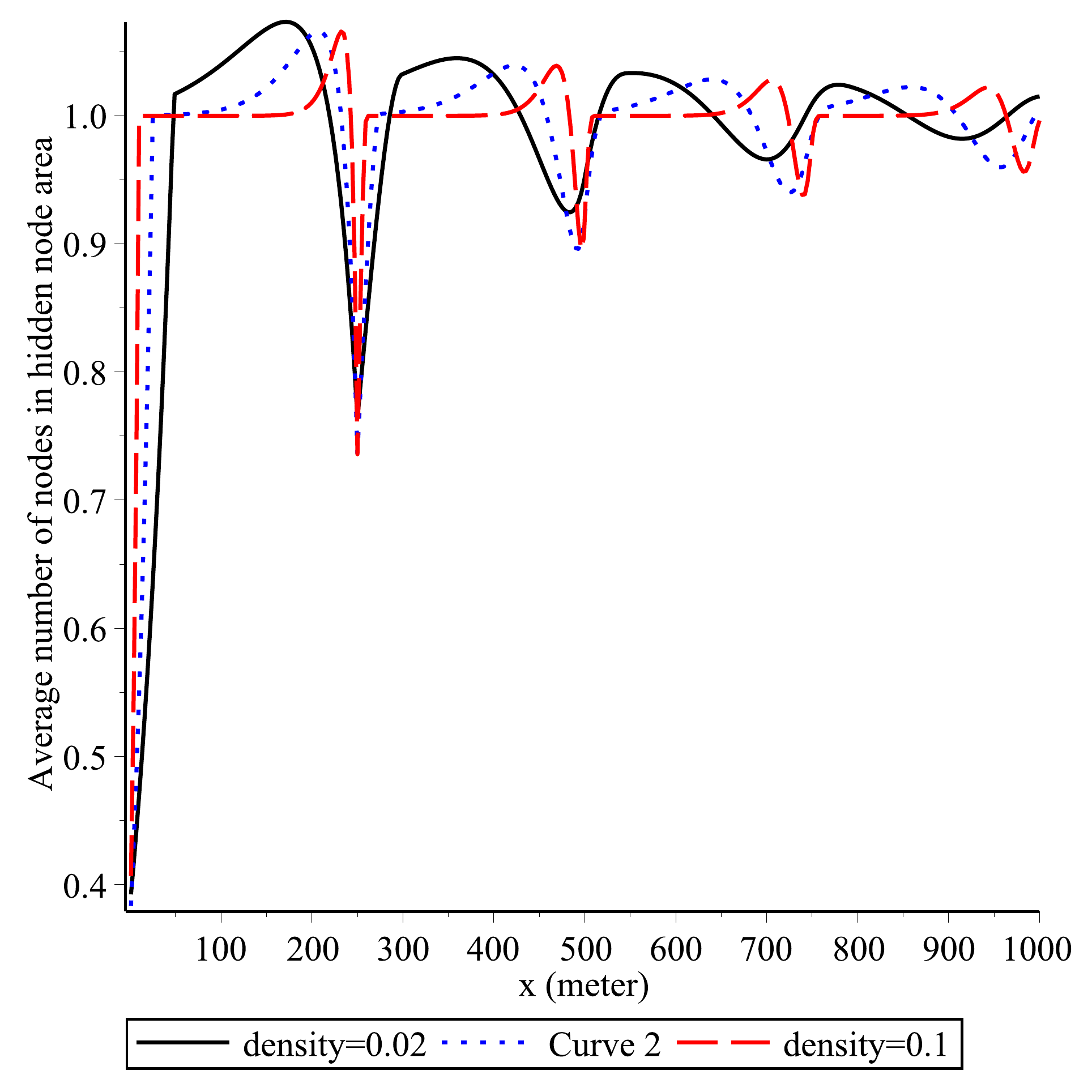}
%\caption{Number of nodes in hidden node area using furthest neighbor routing in $1$-D networks.}
%\label{HiddenAreaFurthest}
%\end{figure}

\begin{figure*}[t!]


\begin{center}
 \subfloat[Random neighbor routing]{ \includegraphics[width=.45\linewidth]{figs/NumberofNodesInHiddenAreaRandom.pdf}\label{HiddenAreaRandom}}
 \subfloat[Furthest neighbor routing]{ \includegraphics[width=.45\linewidth]{figs/NumberofNodesInHiddenAreaFurthest.pdf}\label{HiddenAreaFurthest}}
\caption{Average number of nodes in hidden node area of length (E[d])}
\label{figHidArea}
\end{center}
\end{figure*}

Since this argument is not necessarily true for all statistical 
phenomena, we have computed number of forwarding nodes 
in each region of length $E[d]$. To obtain number of nodes in 
a region of length E[d], one can simply evaluate $N(x+E[d])-N(x)$. 
As it is shown in Fig. \ref{HiddenAreaRandom}, the number of nodes 
in hidden node area for random neighbor routing converges to $1$ and 
there is no significant variance over different $x$. This is also true for 
furthest neighbor routing illustrated in Fig. \ref{HiddenAreaFurthest}. 
Like random neighbor routing, the number of nodes in hidden node area 
converges to $1$ regardless of the node density. Although for the higher 
value of density, it is closer to 1, even for lower densities the variance 
is negligible. Nevertheless, starting from the source node, the hidden nodes must reside between the source's carrier sensing range ($R_{cs}$) and the next hop's carrier sensing range. In other word, even at the worse case scenario where the next hop is very close to the source node, hidden nodes are located at distances which are far greater than 250m  since  $R_{cs}>2R_{tx}>R_{tx}=250m$. The most variance occurs for small value of $x$ in furthest neighbor routing. Note that as x increases, $N(x)$ gets smoother, as it is depicted in Fig. \ref{figRandomtDistFormula} and \ref{figFurthestDistFormula}. Hence, $N(x+E[d])-N(x)$ also becomes smoother as x increases.
Hence, we obtain the probability of collision based on the 
fact that there is always one node in hidden node area on average.

Let $j$ be the node that is located in the hidden node area and $x_i$ be 
the normalized $airtime$ of node $i$ according to \cite{ref1}. 
Assuming that all nodes transmit at the same rate, we omit the index of $x_i$ and, 
hereafter, we write $x$ instead. If any node that located in carrier sensing 
range of both $i$ and $j$ is transmitting a packet, $i$ and $j$ cannot start 
sending a packet and consequently they cannot collide in this case. Such
intermediate nodes are called \textit{contending nodes}. The number of contending 
nodes of the node $i$ located at distance $d_i$ from the source would be 
$N(d_i+R_{cs})-N(d_i)$.

In order for a collision to happen, all contending nodes must be silent. Hence, 
this value should be removed from the sample space. The probability of collision 
for the node $i$, according to \cite{ref1}, is obtained from
Eq. (\ref{eq::collision}) in which $a$ is a fraction of time devoted to the data packet. 

\begin{equation}
\label{eq::collision}
P_{col}=\frac{ax}{1-[N(d_i+R_{cs})-N(d_i)]x}
\end{equation}

Here, the normalized $airtime$ represented by $x$ is equal to $\frac{r}{C}$. Note that in this formula $P_{col}$ depends on the node's distance from the source, $d_i$. However, we assumed that all nodes are statistically independent as a result of which their position does not have a considerable influence. Additionally, we will show in the next section that $N(d_i+R_{cs})-N(d_i)$ can be considered to be constant since the derivative of $N$ converges to a constant value. Hence, we will compute $P_{col}$ for the first node, the $d_i$ of which is zero.

\section{Approximation of $N(x)$}
\label{approximation}
In this section, we aim to provide an approximation for $N_R(x)$ and $N_F(x)$, and use 
it to obtain $T(r)$. As it is shown in Fig. \ref{figRandomtDistFormula}, it is easy to observe 
that for large enough $x$, $N(x)$ approaches a straight line. Additionally, 
it is also easy to see that, using root test, $N^{\prime}(x)$ given by 
Eq. (\ref{uniformODE}) and (\ref{eq::NFxSemiFinal}) is convergent. Hence, it is reasonable to find a linear approximation of $N(x)$.

\begin{theorem}
If $Y_1$, $Y_2$, \ldots are mutually independent, identically random variable with any arbitrary distribution (routing policy) that represents the distribution of the distance to the next hop, linear approximation of their corresponding $N(x)$ is as follows
\begin{equation}
\label{eq::GenerallinearApprox}
\bar{N} (x) = \frac{x}{E[Y]} + \frac{E[Y^2]}{2E^2[Y]} 
\end{equation} 
\end{theorem}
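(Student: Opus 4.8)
The plan is to recognize $N(x)=E[M(x)]$ as the expected first-passage (renewal) count for the partial sums $S_n=\sum_{i=1}^{n}Y_i$ and to extract the straight line that this count approaches. First I would condition on the first jump $Y_1$, exactly as in the proofs of Theorems \ref{RV} and \ref{theorem::3}, which for a \emph{general} density $f_Y$ supported on $[0,\infty)$ yields the single renewal equation
\[
N(x)=1+\int_{0}^{x} N(x-y)\,f_Y(y)\,dy,\qquad x\ge 0,
\]
with $N(u)=0$ for $u<0$. This identity is policy-independent: the only place the routing enters is through $f_Y$. The two exact formulas already obtained are its solutions for the uniform density (Theorem \ref{RV}) and for the truncated-exponential furthest-neighbor density (Theorem \ref{theorem::3}).

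The key step is to pass to the Laplace transform. Writing $\hat N(s)=\int_0^\infty e^{-sx}N(x)\,dx$ and $\hat f(s)=\int_0^\infty e^{-sy}f_Y(y)\,dy$, the convolution structure gives $\hat N(s)=1/\bigl(s(1-\hat f(s))\bigr)$. Since $Y\ge 0$ has finite first and second moments, $\hat f(s)=1-E[Y]\,s+\tfrac12 E[Y^2]\,s^2+o(s^2)$ as $s\to 0^+$, so
\[
\hat N(s)=\frac{1}{E[Y]\,s^{2}}+\frac{E[Y^2]}{2E^2[Y]\,s}+o\!\left(\tfrac1s\right).
\]
Inverting the two leading terms, $\mathcal{L}^{-1}[s^{-2}]=x$ and $\mathcal{L}^{-1}[s^{-1}]=1$, produces precisely the claimed asymptote $\bar N(x)=x/E[Y]+E[Y^2]/(2E^2[Y])$, the remaining $o(1/s)$ contribution being the sub-dominant, vanishing part of $N(x)$.

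The hard part is \emph{not} the slope — that is merely the elementary renewal theorem, $N(x)/x\to 1/E[Y]$ — but the constant intercept, which no naive linear ansatz can detect: substituting $ax+b$ into the renewal equation fixes $a=1/E[Y]$ yet leaves $b$ undetermined, because the intercept cancels at leading order. The intercept surfaces only from the second-order coefficient of $\hat f$, which is exactly why the second moment $E[Y^2]$ appears (matching the claim, stated in the abstract, that only the first two moments of $f_Y$ are needed). Rigorously converting the small-$s$ expansion of $\hat N$ into the large-$x$ behavior of $N$ requires a Tauberian argument, equivalently the key renewal theorem under the standing assumptions that $Y$ is non-lattice with finite $E[Y^2]$; this is the one place the proof must be handled with care.

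As a cleaner intuition and independent check I would also invoke Wald's identity, $E[S_{M(x)}]=N(x)\,E[Y]$, together with the overshoot decomposition $S_{M(x)}=x+B(x)$, giving $N(x)=x/E[Y]+E[B(x)]/E[Y]$; the limiting mean of the overshoot is the mean of the stationary-excess density $\bar F_Y(r)/E[Y]$, namely $E[Y^2]/(2E[Y])$, which reproduces $\bar N(x)$ term for term. Finally, I would verify consistency by letting $x\to\infty$ in the closed forms $N_R$ and $N_F$: with $E[Y]=R/2$ for the uniform policy and the corresponding moments of $f_X$ from Eq.~(\ref{equFurthestDistribution}) for the furthest-neighbor policy, both exact expressions must collapse onto this line, confirming the general result.
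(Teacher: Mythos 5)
Your proposal is correct and follows essentially the same route as the paper's proof: conditioning on the first hop to get the renewal equation, passing to the Laplace transform $\hat{N}(s)=1/\bigl(s(1-\hat{f}(s))\bigr)$, expanding $\hat{f}$ in its first two moments, and inverting the two leading terms to obtain $\bar{N}(x)=\frac{x}{E[Y]}+\frac{E[Y^2]}{2E^2[Y]}$. Your extras---flagging the Tauberian/key-renewal-theorem step needed to rigorously pass from the small-$s$ expansion to the large-$x$ asymptote (which the paper silently skips), and the Wald/overshoot cross-check via the stationary-excess mean $E[Y^2]/(2E[Y])$---are sound refinements of, not departures from, the paper's argument.
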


\begin{proof}
Using the same approach as it is used 
in Theorem (\ref{RV}), we get the following equation, independent of Y's distribution, when $0<x<R$,
\begin{equation}
\label{eq::GeneralProofStartingEq}
	N(x) = 1+ \int_{0}^{x} f_{Y}(x-u) N(u) du. 
\end{equation}

Using the Laplace transform and factoring $\mathcal{L}\{N(y)\}$, we get

\begin{equation}
\label{eq:MainLaplace}
	\mathcal{L}\{N\} (s) = \frac{1}{s(1-\mathcal{L}\{f_{Y}\} (s))}.
\end{equation}

To solve this differential equation, we need to know the Laplace transform 
of $f_{Y}$. Assuming that all moments of $f_{Y}$ exist, Laplace 
transform can be obtained by moment generating property as follows

\begin{equation}
	\mathcal{L}\{f_{Y}\} (s) = \sum_{n=0}^{\infty} (-1)^n \frac{s^n}{n!} E[Y^n].
\end{equation}

Using this Laplace transform in Eq. (\ref{eq:MainLaplace}), we are given

\begin{equation}
	\mathcal{L}\{N\} (s) = \frac{1}{s^2} \times \frac{1}{\sum_{n=1}^{\infty} (-1)^{n-1} \frac{s^{n-1}}{n!} E[Y^n]}.
\end{equation}

Now, since the summation is a power series, we can substitute 
the reciprocal of this power series as follows \cite{realAnalysis}

\begin{equation}
	\mathcal{L}\{N\} (s) = \frac{1}{s^2} \times ( \frac{1}{E[Y]} + \frac{E[Y^2]}{2E^2[Y]}s + ... ).
\end{equation}

Note that since we are interested in a linear approximation of 
$N(x)$, the Laplace form should be in a form 
$\mathcal{L}\{N\} (s)=as^{-2}+bs^{-1}$. As a result, 
we only consider the first two terms of the series, and 
the approximated Laplace transform is given by

\begin{equation}
	\mathcal{L}\{N\} (s) \approx  \frac{s^{-2}}{E[Y]} + \frac{E[Y^2]}{2E^2[Y]}s^{-1}. 
\end{equation}

It is easy to see that this equation is the Laplace transform of Eq. (\ref{eq::GenerallinearApprox}). 
%Shahbaz: Temporarily removed
%Note that Eq. (\ref{eq::GeneralProofStartingEq}) which we used here works when $0<y<R$. For $R<y$, Eq. (\ref{eq::GeneralProofStartingEq}) becomes
%\begin{equation}
%	N(x) = 1+ \int_{x-R}^{x} f_{Y}(x-u) N(u) du. 
%\end{equation}
%However, we can start the intergation from zero and yields the same result since $f_{Y}(x-u)$ is zero when $u$ is smaller than $x-R$.

\end{proof}

Given the fact that $N^{\prime}(x)$ approaches a constant value, 
the value of $[N(d_i+R_{cs})-N(d_i)]$ becomes independent of 
$d_i$ and $P_{col}$ becomes equal for all intermediate nodes. 
Hence, Eq. (\ref{eq::collision}) can be solved with $d_i=0$. Now that a simple approximation of $N(x)$ is provided, maximum throughput can be obtained by Eq. (\ref{eq::maxThrough}).

\subsection{Approximation of $N_R(x)$}

For random neighbor routing, linear approximation is obtained as follows

\begin{equation}
\label{uniformSumapproximation}
\bar{N}_R(x)= \frac{2x}{R} + \frac{2}{3}.
\end{equation}

Curve fitting using least squares error method
for $0<x<10R$ also gave almost 
the same value. The slope of the line obtained by least squares method 
was also $0.00798$ for $R=250$ which is almost the same as 
$N_R^{\prime}(x)$. Note that x-intercept, unlike the slope of the 
approximation line, is independent of the $R$.

\subsection{Approximation of $N_F(x)$}

Using the same method as random neighbor routing, the slope of the 
approximation function of $N_F(x)$ is obtained by its expected value. The expected value of the distribution function $f_x$ in Eq. 
(\ref{equFurthestDistribution}), denoted by $E[X_F]$, is shown in 
Eq. (\ref{eq::ExpDF}). 

\begin{equation}
\label{eq::ExpDF}
E[X_F]= \frac{e^{-\lambda R}+\lambda R - 1}{\lambda(1-e^{-\lambda R})}.
\end{equation}
The second moment of $X_F$ is as follows
\begin{equation}
\label{eq::ExpE2}
E[X^2_F]= \frac{1}{e^{\lambda R}-1} [ R^2 e^{\lambda R} - \frac{2R}{\lambda} e^{\lambda R} + \frac{2e^{\lambda R}}{\lambda^2}-\frac{2}{\lambda^2} ].
\end{equation}

Using these average and second moment of $X_F$, we can use Eq. (\ref{eq::GenerallinearApprox}) as a linear approximation.
 Usefulness of our linear approximations will be shown in Section \ref{simulation}.

\section{Maximum Throughput in 2-D Networks}
\label{$2$-DSection}

As it has been shown in previous sections, the exact formula 
to obtain $N(x)$ can get extremely complicated for some routing 
policies. Consequently, the approximation formulas introduced in 
Section \ref{approximation} are shown to be fairly accurate. 
In $2$-D networks, obtaining the exact value of $N(x)$ even gets 
more complicated, if possible at all. In fact, the probability distribution 
of the number of hops to reach the destination in $2$-D networks 
is derived in \cite{antunes2015hop} which not only does not have 
a closed-form structure, but also may have infinite unsolvable integrals. 
Hence, in this section, we obtain the approximate value of $N(x)$ 
which will be mathematically traceable.

\begin{figure}
\centering
\includegraphics[width=0.5\columnwidth]{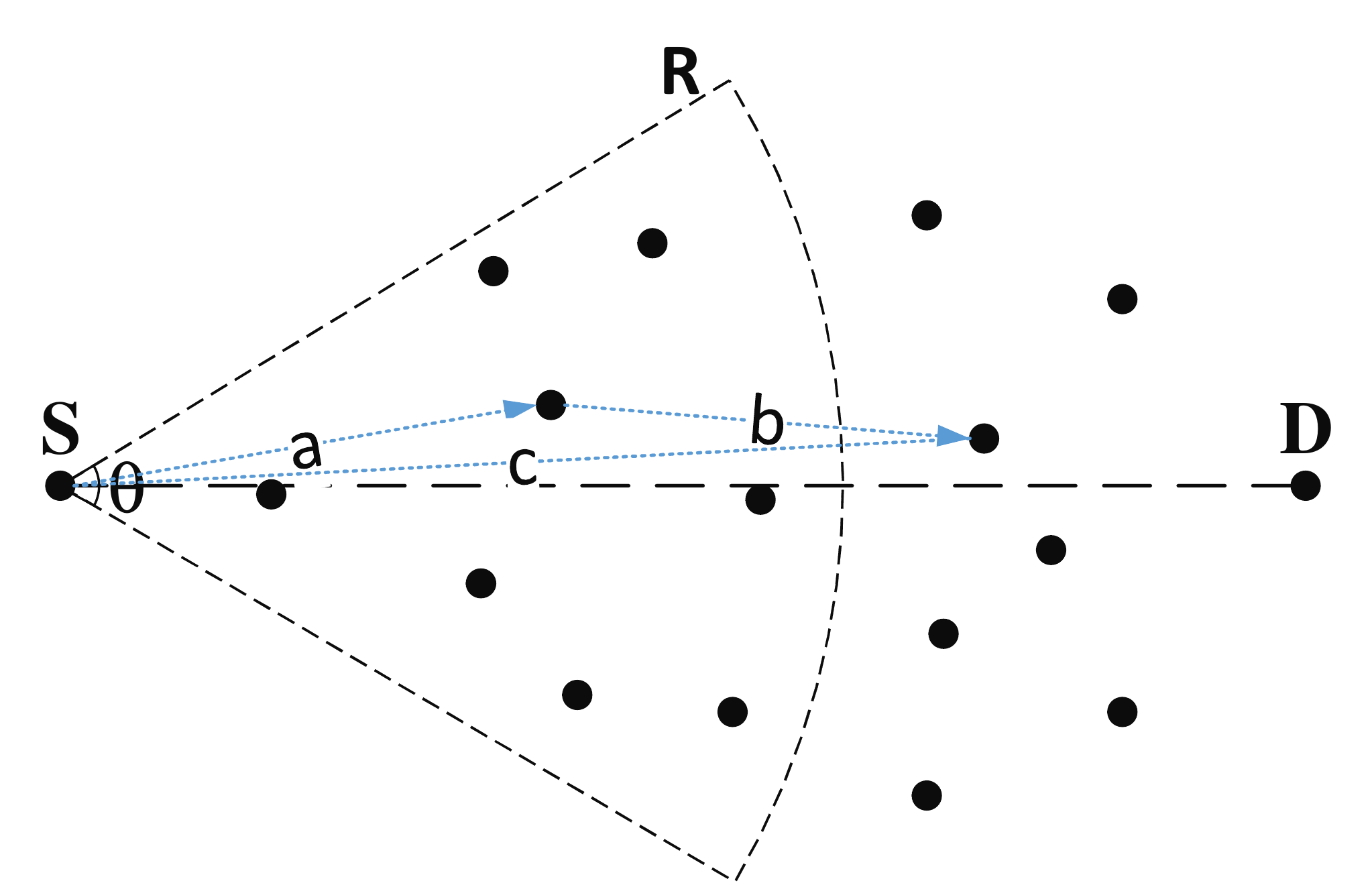}
\caption{Angle of progression and distance to the next hop in $2$-D 
networks. In $2$-D networks, sum of distances of two consecutive 
nodes are not necessarily equal to the distance from the first node 
to the third one.}
\label{fig2D}
\end{figure}

As depicted in Fig. \ref{fig2D}, The most problematic issue in $2$-D networks is that the sum of distances 
from the first to the second node, i.e. $||a||$, and the second to the third node, i.e. $||b||$, is not equal to the 
distance from the first node to the third one, i.e. $||c||$.
In fact, the angle of progression (AoP), which is defined as an angle of a sector in 
which the next hop is selected, $\theta$, should be taken into account. However, 
this consideration makes these random variables dependent and consequently makes 
the problem of finding $N(x)$ in $2$-D networks intractable. Since it is shown that 
routing protocols usually choose a straight line from a source to a destination if the 
network is dense enough \cite{straightLine}, we assume that these random variables 
are independent. Note that if the angle of progression is relatively small, these random variables will act independently. This assumption also allows us to use Eq. (\ref{eq::maxThrough}) to obtain maximum throughput.

This section includes two parts which aim to approximate $N(x)$ for random neighbor routing, $N_{R2D}(x)$, and furthest neighbor routing, $N_{F2D}(x)$.

\subsection{Approximation of $N_{R2D}(x)$}

First, we need to find the probability density of distance to 
the next hop when random neighbor policy is used, like our 
$1$-D analysis. The probability density function is denoted by 
$f_{R2D}$. Cumulative distribution function can be obtained 
%MG3: Is not the parameter x used previously denoting another parameter..?
%SR3: It does not really matter here. In my paper, all x are temporary variables to obtain distribution function or other things. They are like local variables in programming. I have not used them in calling anything particularly important. Like indexes, it does not matter if you have used an index of i for counting different things in different series.
by dividing the area of sector of length $x$ by the area of sector 
of length $R$, as follows

\begin{equation}
	F_{R2D}(x) = \frac{\frac{\theta}{2}x^2} {\frac{\theta}{2}R^2} = \frac{x^2}{R^2}.
\end{equation}

Therefore, density function and expected value are obtained as follows,

\begin{equation}
	f_{R2D}(x) = \frac{dF(x)}{dx} = \frac{2x}{R^2}
\end{equation}
and
\begin{equation}
	E[X_{R2D}] = \frac{2}{3}R.
\end{equation}

The second moment is also easy to get which is

\begin{equation}
	E[X^2_{R2D}] = \frac{R^2}{2}.
\end{equation}

Hence, the approximate number of nodes required to reach the location 
at $x$, $\bar{N}_{R2D}(x)$, is obtained as follows

\begin{equation}
\label{ApproximatedR$2$-D}
	\bar{N}_{R2D}(x) = \frac{3x}{2} + \frac{9}{16}.
\end{equation}

Now, using Eq. (\ref{eq::maxThrough}) and (\ref{ApproximatedR$2$-D}), 
we can obtain the end-to-end throughput of random neighbor routing in 
$2$-D networks. In the next section, we will show that this approximation 
works perfectly particularly when the angle of progression is small.

\subsection{Approximation of $N_{F2D}(x)$}

In this section, we will find the approximation for $N_{F2D}(x)$. The probability density of the distance to 
the furthest node in $2$-D networks obtained similar to the $1$-D 
case and it is as follows

\begin{equation}
\label{pdfOfF2D}
	f_{F2D}(x) = \frac{\lambda \theta x e^{ \frac{\theta}{2}\lambda x^2}}{e^{\frac{\theta}{2}\lambda R^2}-1}.
\end{equation}

The expected value is also derived from Eq. (\ref{pdfOfF2D})

\begin{equation}
\label{ExactEF2D}
	E[X_{F2D}] = \frac{R e^{\frac{\theta}{2} \lambda R^2} }{e^{ \frac{\theta}{2}\lambda R^2}-1} -  \frac{ \int_{0}^{R} e^{\frac{\theta}{2} \lambda x^2} dx} {e^{ \frac{\theta}{2}\lambda R^2}-1},
\end{equation}

which needs the computation of the error function. Chebyshev integral inequality states that

\begin{equation}
	\int_a^b f(x)g(x) dx \geq \frac{1}{b-a} [\int_a^b f(x) dx] [\int_a^b g(x) dx].
\end{equation}

Using this inequality and assuming that 
$f(x)=e^{\frac{\theta}{2} \lambda x^2}$ and $g(x)=2x$, 
both of which are monotonically increasing, we can obtain a 
bound for the second term in $E[X_{F2D}]$ as follows

\begin{equation}
\label{UpperBoundF2DExpectedValue}
	\frac{ \int_{0}^{R} e^{\frac{\theta}{2} \lambda x^2} dx} {e^{ \frac{\theta}{2}\lambda R^2}-1} \leq \frac{2}{R\lambda \theta}.
\end{equation}

The linear function, $g(x)$, is used here for convenience which allows us to integrate the left side of the inequality. In fact, any increasing function can be used, leading to a different inequality. Note that this bound decreases quickly to zero as density increases. Using this inequality, the approximated value of $E[X_{F2D}]$, which 
does not require the computation of error function, is obtained as follows

\begin{equation}
\label{AppEF2D}
	E[X_{F2D}] \approx \frac{R e^{\frac{\theta}{2} \lambda R^2} }{e^{ \frac{\theta}{2}\lambda R^2}-1} -  \frac{2}{R\lambda \theta}.
\end{equation}

\begin{figure}
\centering
\includegraphics[width=0.5\columnwidth]{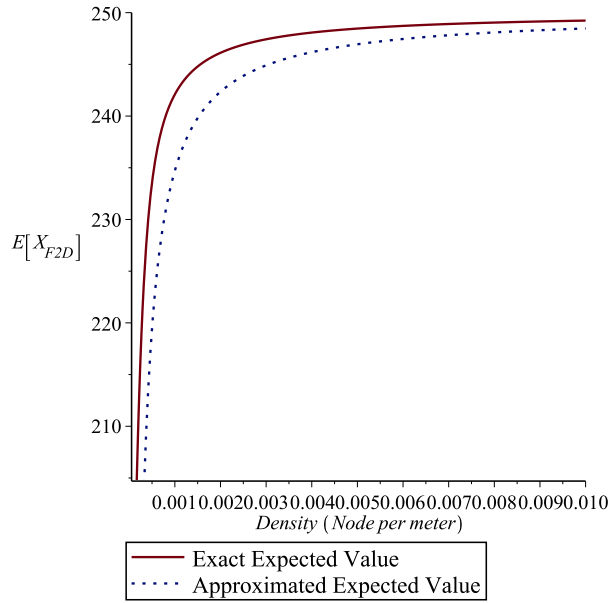}
\caption{$E[X_{F2D}]$ and its approximation.}
\label{ExactAndAppOfEF2D}
\end{figure}

The difference between Eq. (\ref{ExactEF2D}) and (\ref{AppEF2D}) 
is shown in Fig. \ref{ExactAndAppOfEF2D}. It is clear that for realistic node density,
Eq. (\ref{AppEF2D}) accurately approximated the exact expected value.
For the second moment of $X_{F2D}$, we get
\begin{equation}
	E[X^2_{F2D}] = \frac{R^2 e^{\frac{\theta}{2} \lambda R^2} }{e^{ \frac{\theta}{2}\lambda R^2}-1} -  \frac{2}{\lambda \theta} \approx R\times E[X_{F2D}] .
\end{equation}

Hence, Eq. (\ref{eq::linearApproxNF2D}) is a linear approximation of $N_{F2D}$.

\begin{equation}
\label{eq::linearApproxNF2D}
	\bar{N}_{F2D} (x) =  \frac{x}{E[X_{F2D}]} + \frac{E[X^2_{F2D}]}{2E^2[X_{F2D}]} = \frac{1}{E[X_{F2D}]} (x+\frac{R}{2}).
\end{equation}

%%%%%%%%%%%%%%%%%%%%%%%%%%%%%%%%%%%%%%%%%%%%%%%%%%%%%%%%%%%%%%%%%%%
\section{Simulations}
\label{simulation}
To validate the analytical results obtained in previous sections 
we did an exhaustive network simulation using network simulator 
NS$2$ \cite{ns}. We intentionally changed AODV routing protocol 
to selects random or furthest neighbor based on the scenario. 
%Shahbaz: Temporarily removed
%In this section we first describe the simulation parameters, and
%then discuss random and furthest neighbor based simulations. 
%in two consequent subsections.
In our simulations, all nodes are distributed in a 
%MG3: The following setting is for $1$-D scenario, I think.
%SR3: Done.
line of length $2000m$ randomly. For $2$-D case, all nodes 
were randomly deployed in a region of $2000 \times 1000 m^2$. 
The source and destination are 
located at two different sides of our network. IEEE $802.11$ MAC 
without RTS/CTS mechanism is used in all wireless nodes. The 
distance model with the transmission range of $250m$ is chosen for 
message passing among nodes. 
%Shahbaz: Temporarily removed
%It is worth noting that in comparison with the practical world, 
%the choice of distance model is not far away while it could be 
%substituted with a more sophisticated wireless channel model 
%proposed in \cite{Homayoun}. 
Interference range and carrier 
sensing range are set as $450m$ and $500m$, respectively.
Nodes are assumed to be stationary and there is 
no other source of traffic in the network. 
Each simulation scenario for each data point is conducted 1000 times, each with different seed value. The confidence interval of 99\%  is also shown in the figures. A single flow consisting of a pair of source and destination nodes located in two far apart edges of the network is simulated for $100$ seconds.

The data rate of the wireless channel is chosen as $1Mbps$. Moreover, nodes are already aware of the maximum throughput 
of a single-hop transmission. 
To find the throughput of a single-hop communication, i.e. $C$, we evaluated the maximum 
throughput by simulating a transmission over $2$ adjacent 
nodes. In our simulation, $C$ is $0.87Mbps$. 
Note that if the value of Short Inter Frame Space (SIFS), Distributed Inter Frame Space (DIFS), MAC header, Acknowledgment (ACK) length or any other protocol specific parameters are available at the transport or network layer protocols, $C$ could be easily obtained mathematically.

%\subsection{Random Neighbor Routing}

Using random neighbor routing, the sending rate of the source node
has been increased to find the rate at which the end-to-end throughput 
reaches its maximum. 
As it is shown in Fig. \ref{figRandomSim}, our 
exact and approximate analysis of IEEE $802.11$ yield almost the same result.
Here, the graph of perfect MAC is 
given just to show how much the imperfection of a MAC layer can affect 
the end-to-end throughput.
Note that the aim of our analysis is to find the maximum achievable 
throughput. Thus, we are not particularly concerned about the values 
our model gives for sending rates greater than the maximum. The reason 
why our model fails to predict values greater than maximum is that our 
model overestimates the probability of collision in this case. When a sending
rate of the source node is above the maximum, the majority of collisions happen 
at the first nodes. Hence, packets that are forwarded successfully out of this 
region can successfully reach the destination. Note that intermediate nodes 
receive fewer packets than the second or third nodes and as a result their 
forwarding rate is less than the maximum throughput. It is worth mentioning that when a source node's sending rate is
less than the maximum end-to-end throughput, almost no collision or contention will occur. As a result, as you can see in Fig. \ref{figRanFurSim}, 
the confidence interval is very close to the average value. However, as the sending rate gets closer to the maximum value, variance over end-to-end throughput increases as the probability of collision and contention increases. 

%Shahbaz: Merge them into one figure 
%\begin{figure}
%\centering
%\includegraphics[width=0.5\columnwidth]{}
%\caption{End-to-end throughput of random neighbor routing in $1$-D networks.}
%\label{figRandomSim}
%\end{figure}
%
%\begin{figure}
%\centering
%\includegraphics[width=0.5\columnwidth]{}
%\caption{End-to-end throughput of furthest neighbor routing in $1$-D networks.}
%\label{figFurthestSim}
%\end{figure}

\begin{figure*}[t!]
\begin{center}
 \subfloat[random neighbor routing]{ \includegraphics[width=.4\linewidth]{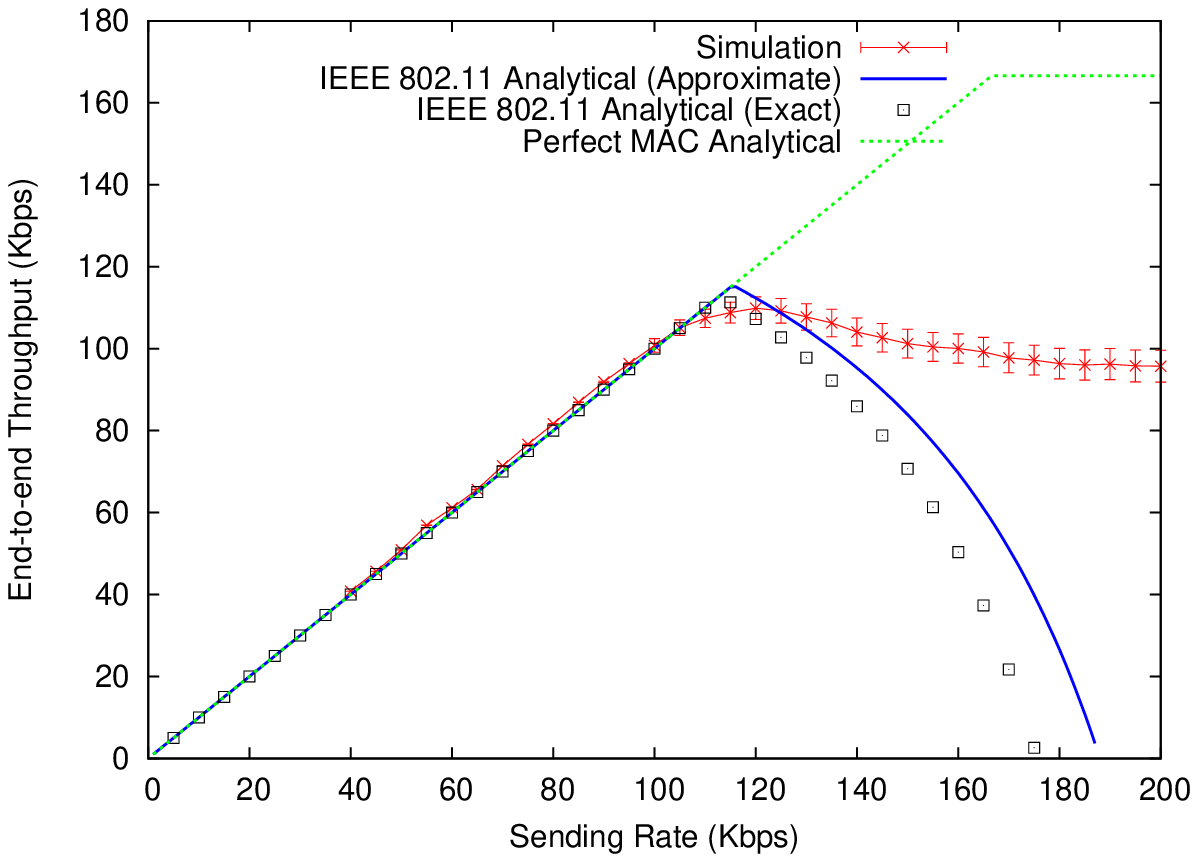}\label{figRandomSim}}
 \subfloat[furthest neighbor routing]{ \includegraphics[width=.4\linewidth]{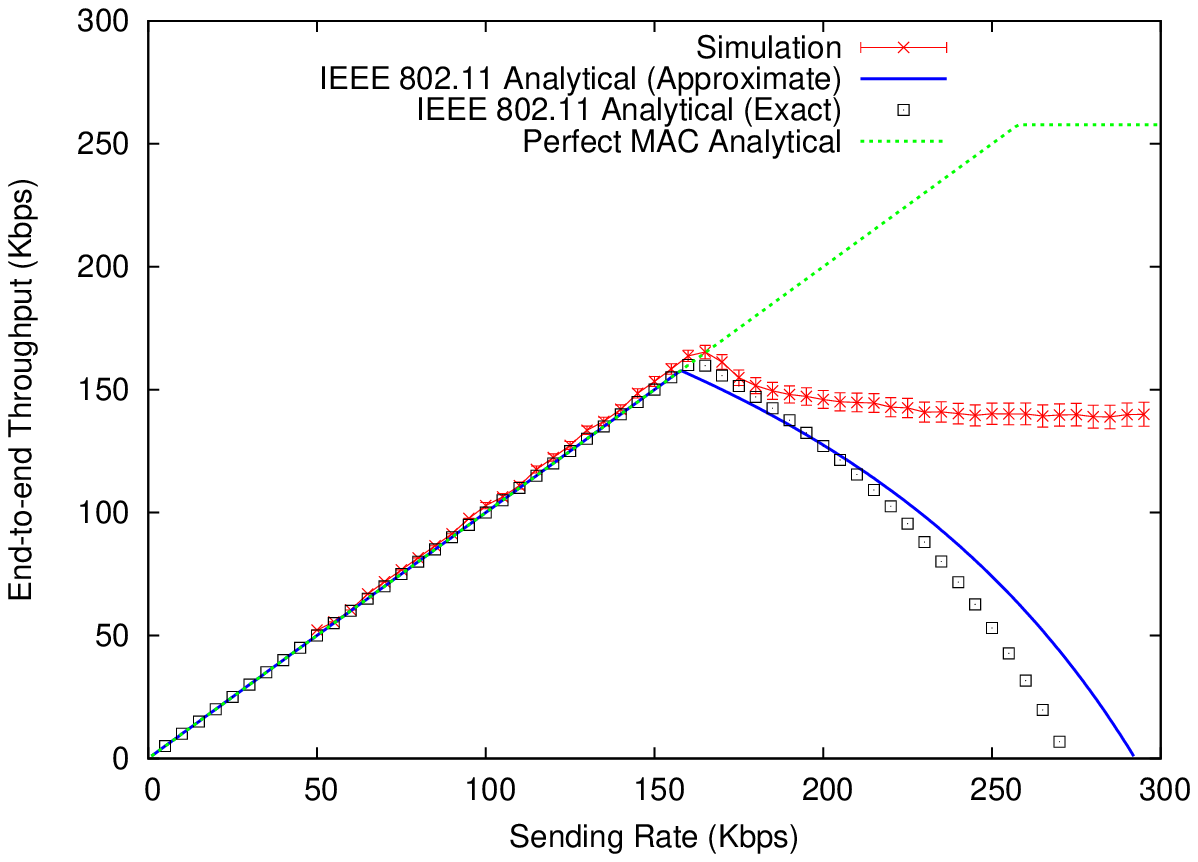}\label{figFurthestSim}}
\caption{End-to-end throughput in $1$-D networks}
\label{figRanFurSim}
\end{center}
\end{figure*}

%\subsection{Furthest Neighbor Routing}

Fig. \ref{figFurthestSim} illustrates end-to-end throughput for 
node density equals to $0.04$ using furthest neighbor routing. 
This means that each node has 10 neighboring nodes on average. As it is shown in Fig. 
\ref{figFurthestSim}, our exact analysis predicts the maximum 
end-to-end throughput with high precision. Although the approximate 
approach slightly underestimates the maximum end-to-end throughput, 
the difference is negligible and it is still possible to use the approximate 
value in practice.

%\subsection{Effect of Node Density}
It is worth noting that in furthest neighbor routing, the number 
of neighbors is of paramount importance. Since the routing policy 
enforces a node to select the furthest neighbor, increasing the 
node density increases the chance of selecting a node closer to 
$R_{tx}$. As a result, the value of $N_F(x)$ decreases as node 
density increases. Hence, we expect an improvement in end-to-end 
throughput as node density increases. As it is shown in Fig. 
\ref{figDensitySim}, the end-to-end throughput increases until 
it reaches a point in which the distance to the next hop is always 
close to $R_{tx}$. 
%Hence, the results significantly confirm the validity 
%of our model.

\begin{figure}
\centering
\includegraphics[width=0.4\columnwidth]{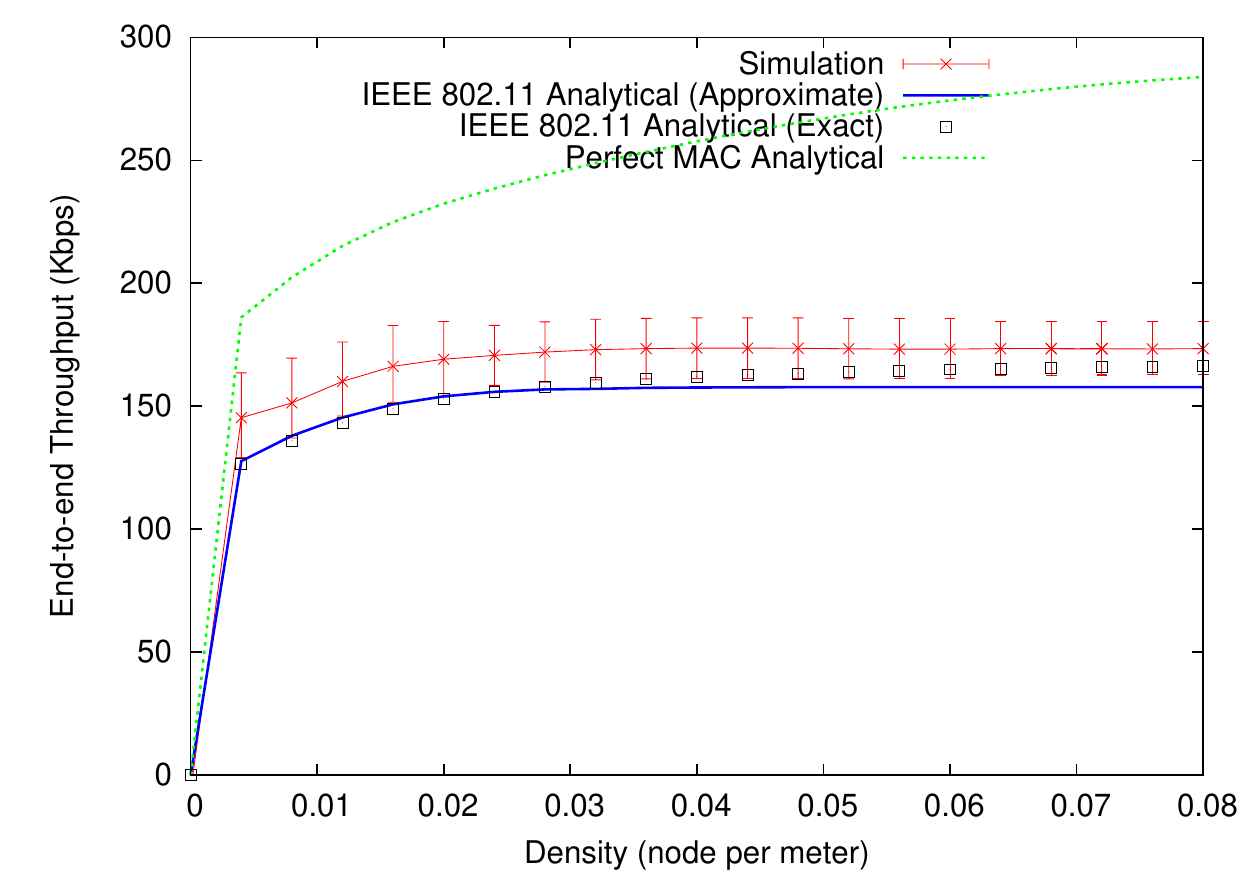}
\caption{Effect of node density on maximum end-to-end throughput in $1$-D networks.}
\label{figDensitySim}
\end{figure}

In Fig. \ref{figAODVSim}, end-to-end throughput of AODV is shown. 
As it is mentioned earlier, the throughput of AODV should be between 
random neighbor routing and furthest neighbor routing. The simulation 
whose results are shown in this figure clearly affirms our argument of 
Section \ref{routingAnalysis}.

\begin{figure}
\centering
\includegraphics[width=0.4\columnwidth]{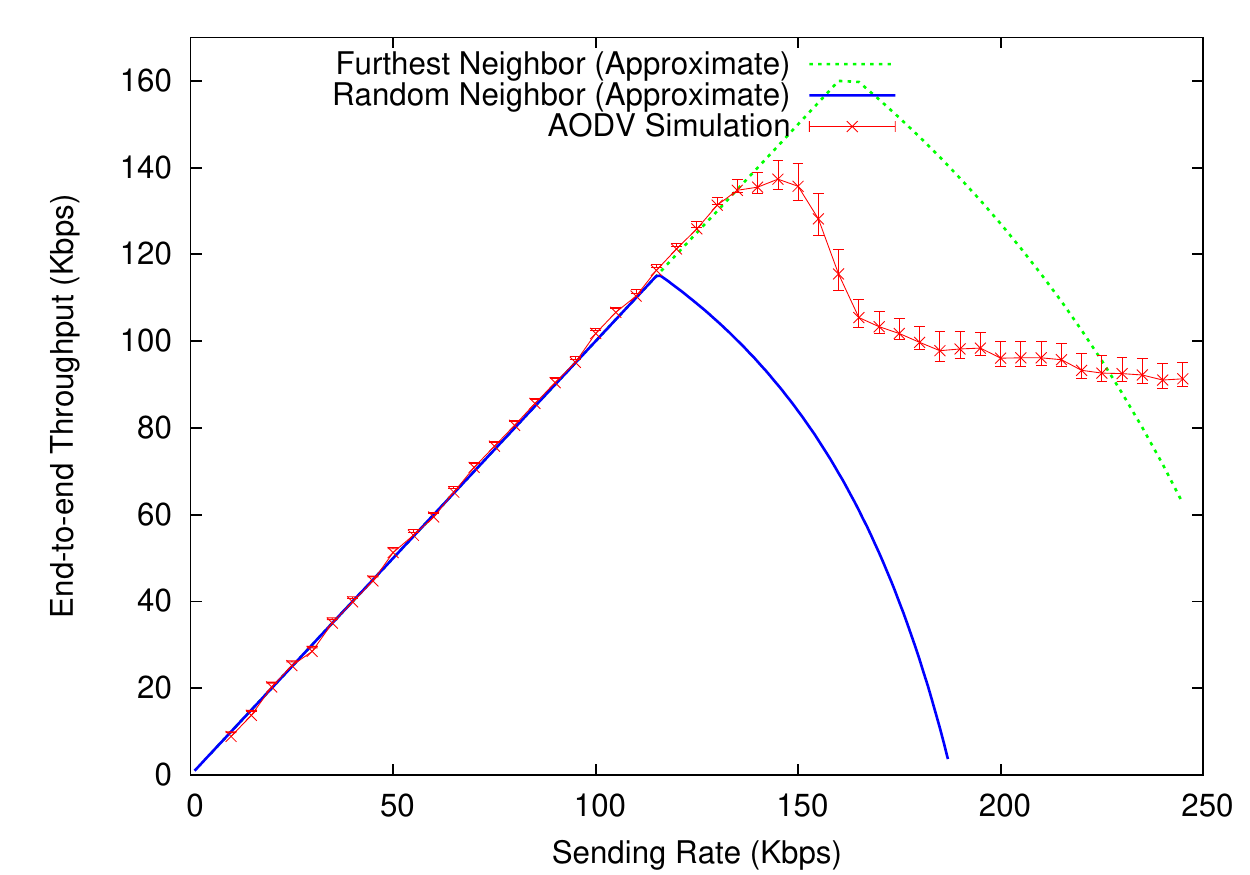}
\caption{AODV routing protocol in comparison with random and furthest neighbor routing.}
\label{figAODVSim}
\end{figure}

\begin{figure}
\centering
\includegraphics[width=0.4\columnwidth]{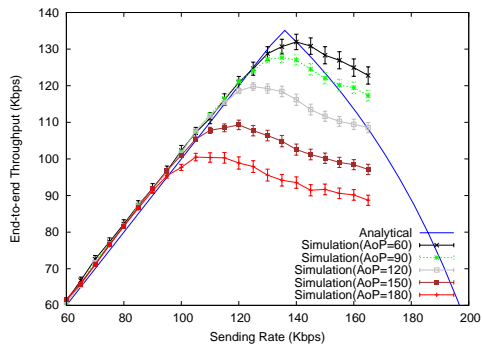}
\caption{End-to-end throughput of random neighbor routing in $2$-D networks ($\lambda$=0.00015).}
\label{SimR2D}
\end{figure}

Fig. \ref{SimR2D} illustrates the end-to-end throughput of random 
neighbor routing in $2$-D networks. Our analytical expression almost 
meets the simulation results when the angle of progression (AoP) is 
small. Note that since the route to the destination usually approaches 
a straight line \cite{straightLine,baccelli2010stochastic}, it is reasonable to assume that the 
AoP is smaller than $60$ degrees in reality. Comparing Fig. \ref{figRandomSim} and \ref{SimR2D}, we have 
noticed that the end-to-end throughput of random neighbor routing 
in $2$-D networks is greater than $1$-D networks. This interesting 
phenomenon stems from the fact that in $2$-D networks random 
neighbor routing tends to select a more distant node than in $1$-D. 
In $1$-D networks, the number of nodes distributed in each small 
region at a distance of $x$ from the source node are statistically 
equal. On the other hand, in $2$-D networks, the number of nodes 
located at a distance of $x$ from the source is increased as $x$ 
gets bigger. Note that nodes at a distance $x$ from the source are 
located at an arc whose length is increased as $x$ gets bigger. For 
this reason, the end-to-end throughput of random neighbor routing 
increases as the dimension of space increases. 
As it is shown in Fig. \ref{SimR2D}, the end-to-end throughput decreases as AoP increases. The reason is that when AoP is smaller the forwarding nodes are more likely to be in a straight line towards the destination. However, when AoP is larger, the number of nodes in the hidden node area becomes more than one. As we discussed in section \ref{$2$-DSection}, in 2-D networks, when the next hops are not chosen in a relative straight line to the destination, the sum of distances 
from the first node to the second node, and the second to the third node is not equal to the distance from the first node to the third one, shown in Fig. \ref{fig2D}. That means that nodes are generally closer together. As a result, contention is higher and the number of nodes in the hidden node area is also greater than one. That is why our model fails to predict the maximum end-to-end throughput when AoP is large.

\begin{figure*}[t!]
\begin{center}
 \subfloat[AoP = $60^{\circ}$]{ \includegraphics[width=.4\linewidth]{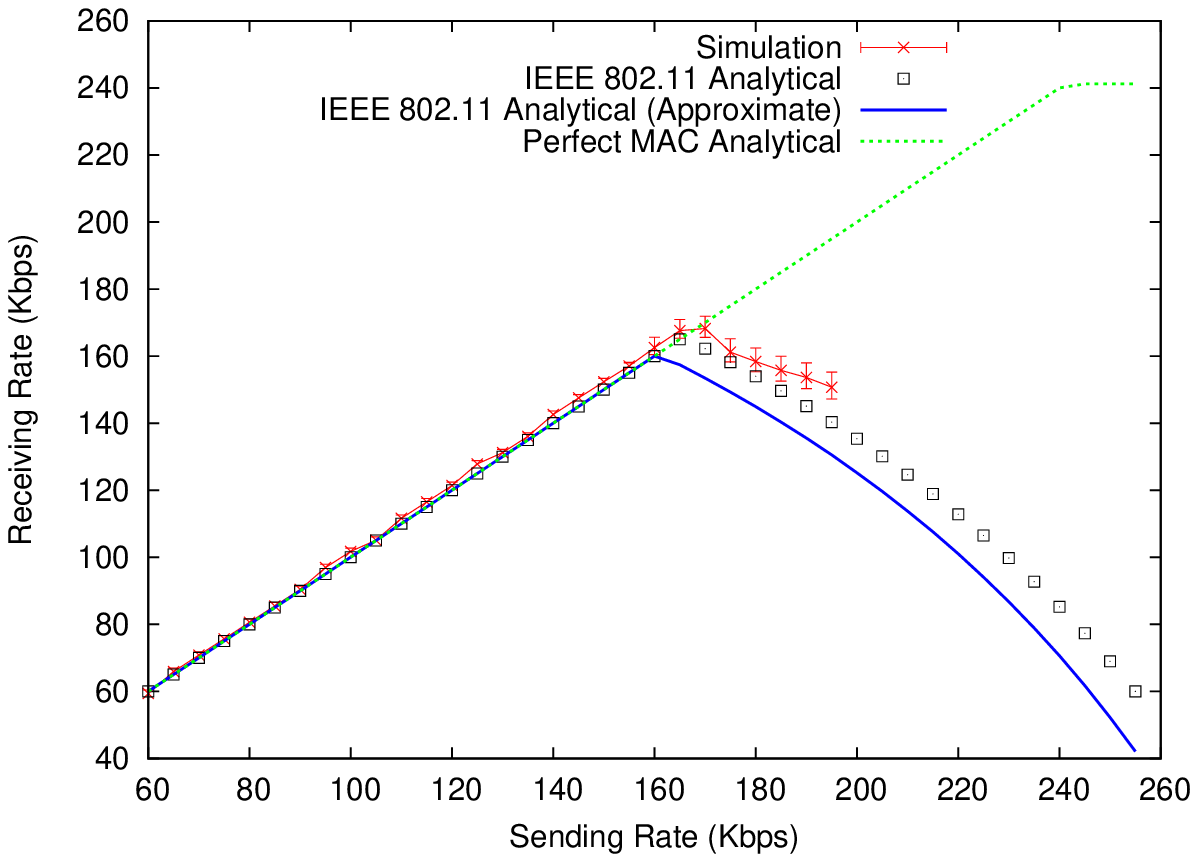}}
 \subfloat[ AoP = $120^{\circ}$]{ \includegraphics[width=.4\linewidth]{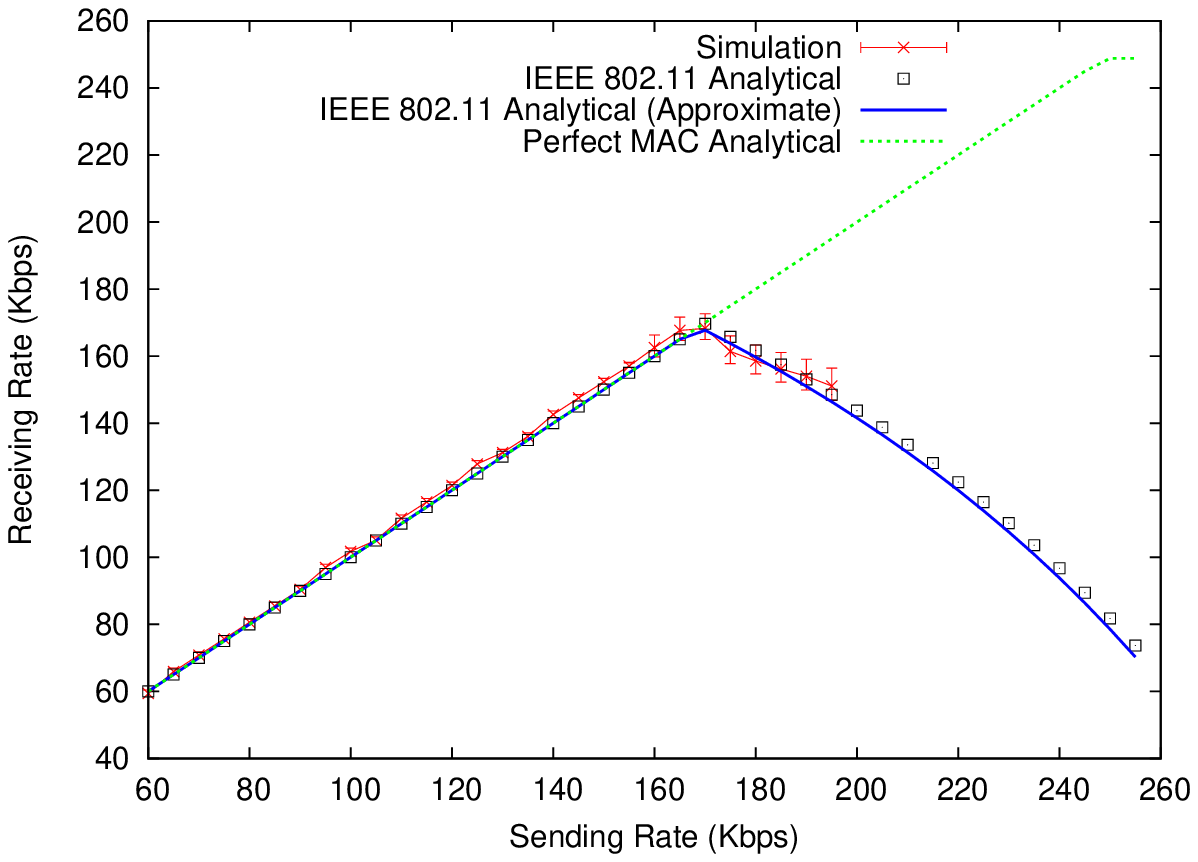}}
\caption{End-to-end throughput of furthest neighbor routing in $2$-D networks ($\lambda$=0.0002).}
\label{figFurthestSim2D}
\end{center}
\end{figure*}

Next, we conducted a series of simulations to validate our analysis of 
furthest neighbor routing, which is plotted in Fig. \ref{figFurthestSim2D}. 
In this figure, \textit{Analytical} and \textit{Perfect MAC Analytical} 
are obtained using Eq. (\ref{ExactEF2D}), and \textit{Analytical (Approximation)} 
is obtained from Eq. (\ref{AppEF2D}). As it is shown, unlike random 
neighbor routing, the angle of progression has a less dramatic effect 
on the maximum throughput. Moreover, as Angle of progression increases, 
the upper bound in Eq. (\ref{UpperBoundF2DExpectedValue}) vanishes to 
zero and consequently the approximation function gets closer to the exact 
function. 

Note that in this figure, the density is $0.0002$, which means that there 
are about $40$ nodes in the transmission range of each node. However, 
the number of nodes in the angle of progression is meaningfully lower. For 
instance, when AoP is $60$ degree, i.e. $\pi /3$, there are only $6.5$ nodes 
available to be selected in the routing process. In fact, we encountered some 
simulations in which there were no path at all between the source and destination. 
Hence, our approximation estimated the maximum value of throughput even for 
reasonable node density.

\begin{figure*}[t!]
\begin{center}
 \subfloat[AoP = $60^{\circ}$]{ \includegraphics[width=.4\linewidth]{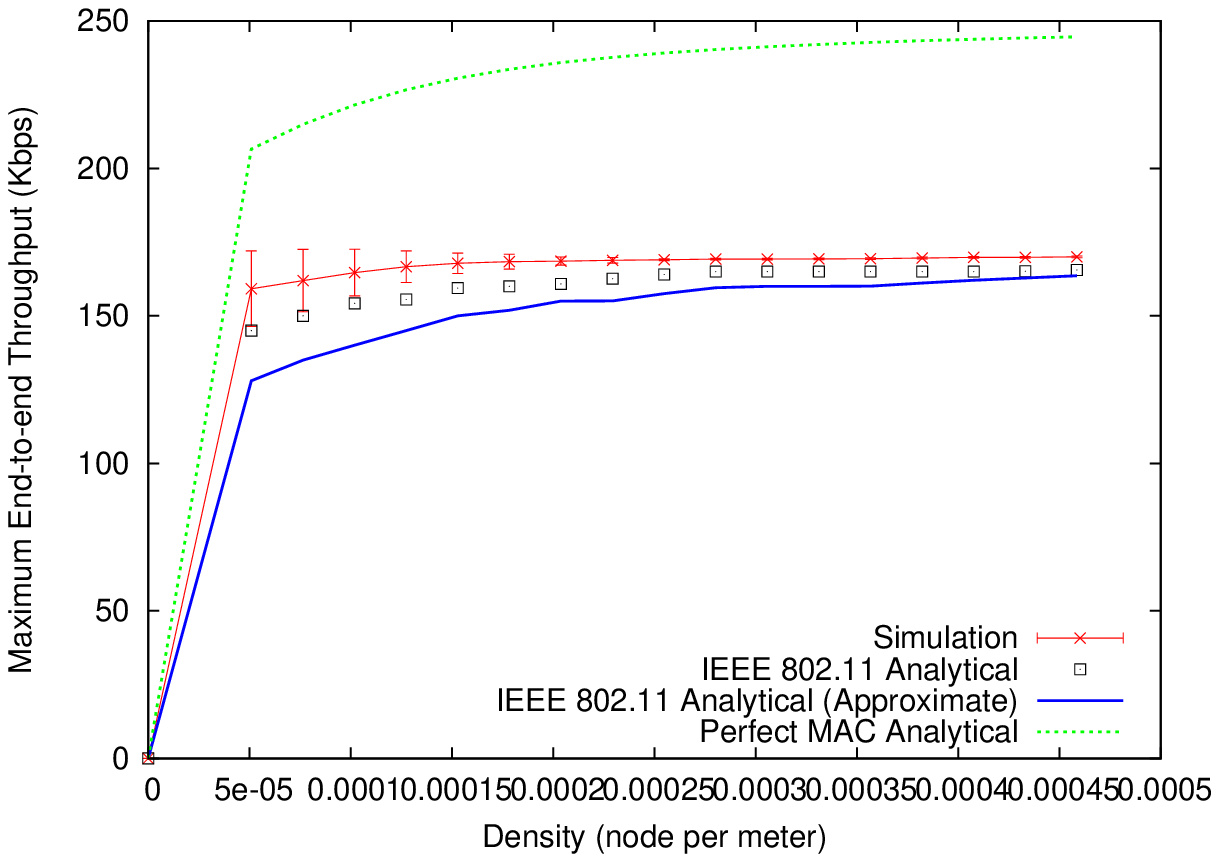}}
 \subfloat[ AoP = $120^{\circ}$]{ \includegraphics[width=.4\linewidth]{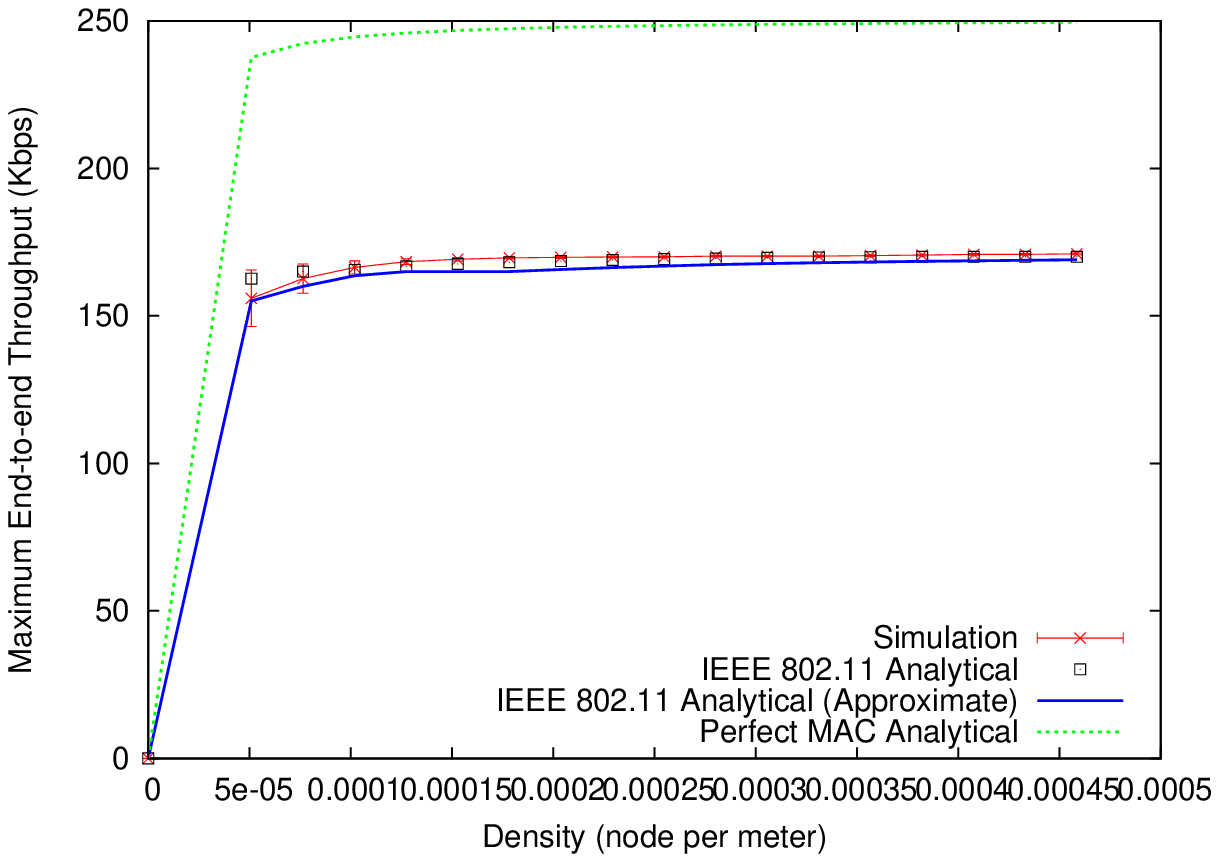}}
\caption{Effect of node density on the maximum end-to-end throughput in $2$-D networks.}
\label{figDensitySim2D}
\end{center}
\end{figure*}

To capture the effect of node density on maximum throughput, 
Fig. \ref{figDensitySim2D} is provided. As node density or AoP 
increases, our analytical formula approaches the simulation 
value. Due to the bound expressed in Eq. (\ref{UpperBoundF2DExpectedValue}), 
it is clear that our approximation underestimates the maximum 
throughput when both density and AoP are low. In most cases, 
however, the difference is negligible.

%%%%%%%%%%%%%%%%%%%%%%%%%%%%%%%%%%%%%%%%%%%%%%%%%%%%%%%%%%%%%%%%%%%

%                                                                                           Conclusions

%%%%%%%%%%%%%%%%%%%%%%%%%%%%%%%%%%%%%%%%%%%%%%%%%%%%%%%%%%%%%%%%%%%

\section{Conclusion}
\label{conclusion}
This paper has presented a novel analytical approach to obtain 
end-to-end throughput. The routing policy 
along with its effect on the distance between intermediate nodes 
disregarded completely in previous works. In this work, we have  
taken both into consideration to obtain a more precise value for 
maximum end-to-end throughput. We have shown that routing policy 
has an important role in end-to-end throughput which cannot be overlooked. 
It has an influence upon the number of forwarding nodes in crucial regions 
such as transmission range, interference range as well as carrier sensing 
range.
Given a perfect MAC layer in which no hidden or exposed node problem 
occurs, the maximum end-to-end throughput has been obtained for two 
routing policy, including random neighbor routing and furthest neighbor 
routing. Then, the model has been extended to consider imperfection 
of IEEE $802.11$ MAC layer. Due to the relative complexity of the model, 
the approximation method has been also presented so as to render this 
model practical for computationally limited wireless nodes. We have also 
extended our approximation to include $2$-D networks. The validity 
of our analytical approach has been validated by simulation. We believe 
that the proposed model can be used in wireless nodes for admission control 
and flow control. Given the distribution to the next hop for a particular 
routing policy, our methodology also could be used to obtain the maximum 
end-to-end throughput.

As a future work, we will investigate how delay can be obtained in such networks. Due to the shared medium of wireless networks, when a node is transmitting, it affects the communication of neighboring nodes. Classical queue-based approaches cannot deal with this dependency. Therefore, it is more difficult to obtain the end-to-end delay analytically. It is also worth investigating how a different and more realistic channel model can be considered in obtaining end-to-end throughput. Moreover, our analysis can only predict a single flow. When there are multiple flows in the networks whose nodes are close enough to interfere, the end-to-end throughput will be considerably lower. In this scenario, further analysis is needed to find the number of hidden nodes and consider the effect of high contention.

\ifCLASSOPTIONcaptionsoff
  \newpage
\fi

\nocite{*}
\bibliographystyle{IEEEtran}
\bibliography{References}

\end{document}